\documentclass[aps,10pt,reprint,groupedaddress,superscriptaddress,onecolumn,longbibliography]{revtex4-2}
\usepackage[usenames,dvipsnames]{xcolor}
\usepackage{amssymb}
\usepackage{graphicx}
\usepackage{amsmath}
\usepackage{dsfont}
\usepackage[bookmarks=true,colorlinks,citecolor=blue,urlcolor=blue]{hyperref}
\usepackage{amsthm}
\usepackage[capitalize,nameinlink]{cleveref}

\usepackage{braket}
\usepackage{babel}
\usepackage{blindtext}
\usepackage{physics}
\usepackage{float}
\usepackage{bm}
\usepackage{apptools}
\usepackage{caption}
\usepackage{subcaption}

\newcommand{\TW}{W_\mathrm{tree}}

\newtheorem{theorem}{Theorem}
\newtheorem{lemma}{Lemma}
\newtheorem{corollary}{Corollary}
\theoremstyle{definition}
\newtheorem{definition}{Definition}

\AtAppendix{\counterwithin{theorem}{section}}
\AtAppendix{\counterwithin{lemma}{section}}
\AtAppendix{\counterwithin{corollary}{section}}

\begin{document}

\author{Olli Hirviniemi}
\email{olli.hirviniemi@iqm.tech}
\affiliation{IQM, Keilaranta 19 D, 02150 Espoo, Finland}
\author{Afrad Basheer}
\email{afrad.basheer@iqm.tech}
\affiliation{IQM, Georg-Brauchle-Ring 23-25, 80992 München, Germany}
\author{Thomas Cope}
\email{thomas.cope@iqm.tech}
\affiliation{IQM, Georg-Brauchle-Ring 23-25, 80992 München, Germany}

\title{Random Quantum Circuits as Seeds for Continuous Generative Models}



\begin{abstract}
We introduce a random circuit family and show they are robust against current classical simulation techniques, specifically tensor network contraction and Pauli propagation. We also show that local variables do not concentrate, ensuring enough variance to be able to produce a diverse set of data points. We therefore argue that using these circuits as a ``random seed" for a larger classical generative model is a way to make large-scale quantum-classical hybrid models amenable towards NISQ devices. 
\end{abstract}

\maketitle

\section{Introduction}
\subsection{Background and Motivation}

Ever since the publication of Shor's seismic algorithm \cite{Shor_1997}, there has been an extensive research effort to understand the additional computational power that quantum systems provide, and build systems to enable this power. Despite their promise, however, the engineering challenges which must be overcome to build quantum computers capable of running large-scale, fault-tolerant algorithms remain significant, leading researchers to turn to other algorithms in order to test the power of our current noisy intermediate scale quantum (NISQ) devices \cite{Preskill_2018}. Of these, one of the most intensely studied areas has been that of \emph{variational algorithms} \cite{Cerezo_VQAReview_2021,Stechly_VQA_2024}; in which an ansatz circuit is chosen, and then its parameters variationally optimised to minimise/maximise a desired cost function. Typically these cost functions take the form of linear operators, $f(\theta)=\mathrm{Tr}[\rho U^\dagger(\theta)O U(\theta)]$. One of the areas these models have been applied to is the field of machine learning; in such models there is an additional \emph{encoding} of the data $\mathbf{x}$ into the state $\rho(\mathbf{x})$, so that the cost function takes the form $\mathbb{E}_{\mathbf{x}\sim \mathcal{D}} \mathrm{Tr}[\rho(\mathbf{x}) U^\dagger(\theta)O U(\theta)]$. This is the general form for such models \cite{Jerbi_Lin_2023}.\\

One issue with variational methods is that, in the absence of prior knowledge regarding how to initialise the variational parameters $\theta$, one typically  cannot do better than picking a starting point uniformly. It was quickly realised that, for expressive ansätze, this leads to a phenomenon known as \emph{barren plateaus} \cite{mcclean2018barren,Qi2023,Arrasmith_2022,larocca2024}. 
This issue is characterised by the gradient becoming exponentially small over almost all parameter instances; as $O(1/\varepsilon^2)$ shots are required to estimate an observable up to precision $\varepsilon$, this means that even a single update becomes exponentially demanding.\\

Consequently, several ansätze were proposed which provably do not suffer from barren plateaus, such as shallow hardware efficient circuits \cite{cerezo2021cost,Khatri2019,Zhao2021,Liu2022,Basheer2022,Letcher2023,Suzuki2023,Leone_HEA_2024}, quantum convolutional neural networks \cite{Pesah2021} and small-angle initialisations \cite{Park2023,Wang2023,Zhang_Gauss_2022}. There was a further blow to variational algorithms in \cite{cerezo2024} however, when it was pointed out that most barren-plateau free ansätze can be optimised purely classically. These fall into two further categories; either $f(\theta)=\mathrm{Tr}[\rho U^\dagger(\theta)O U(\theta)]$ can be estimated purely classically; meaning the ansatz is effectively a classical algorithm; or $f(\theta)$ requires a classical description $\rho_{\mathrm{classical}}$ of the state $\rho$ (e.g. classical shadows \cite{HuangShadow2020}) which is only attainable through the use of a quantum computer; this still offers the possibility of a quantum advantage.\\

In the context of quantum machine learning, the transform $\mathbf{x}\rightarrow \rho(\mathbf{x})_{\mathrm{classical}}$ can be thought of as a quantum-enabled feature transform; and the final performance of the algorithm depends heavily on not just the choice of encoding $\mathbf{x}\rightarrow \rho(\mathbf{x})$, but also the distribution of data $\mathbf{x}\sim \mathcal{D}$. These two factors combined can lead to \emph{both} a barren plateau due to the distribution $\mathcal{D}$, but also a fully classical algorithm. It is therefore difficult to talk about the power of a variational algorithm for quantum machine learning without explicit reference to the dataset in question. \\

In this paper, we avoid this issue by considering a different machine learning paradigm; that of \emph{generative} learning. In generative machine learning, the goal is not to simply learn a function of the data $f(\mathbf{x})$, but to produce new data samples by learning to sample from a distribution close to the true data distribution, $\mathbf{x}\sim \mathcal{D}$.

Here, we focus on the ``minimal" way a quantum computer could contribute to a generative model. Inspired by the literature on variational quantum models for supervised learning, we argue this should be a feature transform $\gamma \rightarrow \rho(\gamma)$, where $\gamma\sim \Gamma$ is classical input randomness. Since we envision this mapping being used in a quantum-classical generative model, we must extract some classical information from $\rho(\gamma)$ (which can then be passed into e.g. neural network layers). We consider here three ways to do this: a vector of expectation values $\langle \Vec{O} \rangle_{\rho(\gamma)}$, a set of classical representations of local subsystems $\Vec{\hat{\rho}}_{i}(\gamma)$, obtained by e.g. tomography or classical shadows (\cite{HuangShadow2020,aaronson2018shadow,PRXQuantum.2.030348,Basheer2022}), or by applying a trainable quantum circuit to $\rho(\gamma)$ and then taking expectation values - effectively giving us $\langle \Vec{O}(\theta) \rangle_{\rho(\gamma)}$. It should be noted by the arguments in \cite{cerezo2024}, we do not expect a ``further" quantum advantage from this trainable layer; it should be considered more akin to a ``quantum inspired approach".\\

In order for our quantum feature map to contribute meaningfully, there are two necessary properties it must have. It must be difficult to simulate classically, and it must not suffer from a ``barren plateau". Since there are no trainable parameters in the feature map; it is to reasonable to ask how a barren plateau could appear. The answer comes in the classical information extracted from $\rho(\gamma)$; if the map is too expressive, it is possible for the information $\Vec{\hat{\rho}}_{i}(\gamma)$ to concentrate and become (when using only polynomial precision) effectively indistinguishable for all $\gamma$. In the context of generative learning, this would lead to a problem known as \emph{mode collapse}, in which the model could, at best, only produce a single realistic data instance (imagine it producing the same image of a cat repeatedly) since the seed randomness is not carried to the final output. From here onwards, we will refer to this problem as mode collapse, and reserve the term barren plateau for when considering the trainable addition to the model.\\

The technical part of this paper is therefore dedicated to showing that our choice of feature map $\gamma \rightarrow \rho(\gamma)$ does not suffer from this issue, and is robust against two leading simulation methods, namely tensor networks \cite{PhysRevLett.91.147902,cirac2009tensor,Markov_Shi_2008} and Pauli propagation \cite{PhysRevA.99.062337,shao2024simulating,angrisani2024classically}. We also acknowledge that various aspects of this argument appear elsewhere in the literature. \cite{gao_enhancing_2022} considers enhancing (classical) generative Bayesian networks using entangled states which violate Bell inequalities; the idea of using classical neural networks to compress high dimensional data into a low dimensional space for a quantum model to learn appears in e.g. \cite{perdomo-ortiz_opportunities_2018,changLatentStylebasedQuantum2024}. Using a quantum generator's expectation values for continuous data appears in \cite{romero_variational_2021,bravo-prieto_style-based_2022,changLatentStylebasedQuantum2024,Barthe2025EVS}. The
alternative approach of discretising the data was considered in \cite{zoufal_quantum_2019}, and further explored in various forms in \cite{borras_impact_2023}. \cite{bravo-prieto_style-based_2022,changLatentStylebasedQuantum2024} consider a similar structure to our model, except with a style-based encoding, in which the angles in the circuit take the form of trainable transformations $\theta = M\gamma + c$.  The barren plateau proofs in the appendix of \cite{changLatentStylebasedQuantum2024} can be adapted to our case also and the merits and applications of those techniques in comparison to ours are discussed in Appendix \ref{app:variance}. Our contribution to the discussion of hybrid quantum-classical models is to take the argument one step further; even without trainable parameters, the quantum layer can still provide meaningful transformations. \\

The structure of the paper is as follows: in Section \ref{sec:model_structure} we explain the structure of our feature transform $\rho(\gamma)$ and the reasoning behind it. We also extend the transformation into a trainable quantum model. In Section \ref{sec:barren_plateau} we discuss why the feature transformation does not suffer from mode collapse, and show that the trainable model does not suffer from barren plateaus when the feature transform satisfies a condition we call a subvolume law, as well as showing our generative states satisfy this condition. In Section \ref{sec:classical_simulation} we look at two methods for classical simulation, tensor network contraction and Pauli propagation, and argue that our model poses difficulties for them, as well as providing numerical evidence. Finally, in Section \ref{sec:conclusions}, we go through the advantages of our model and provide some questions for further research.

\begin{figure}
    \centering
    \includegraphics[width=\linewidth]{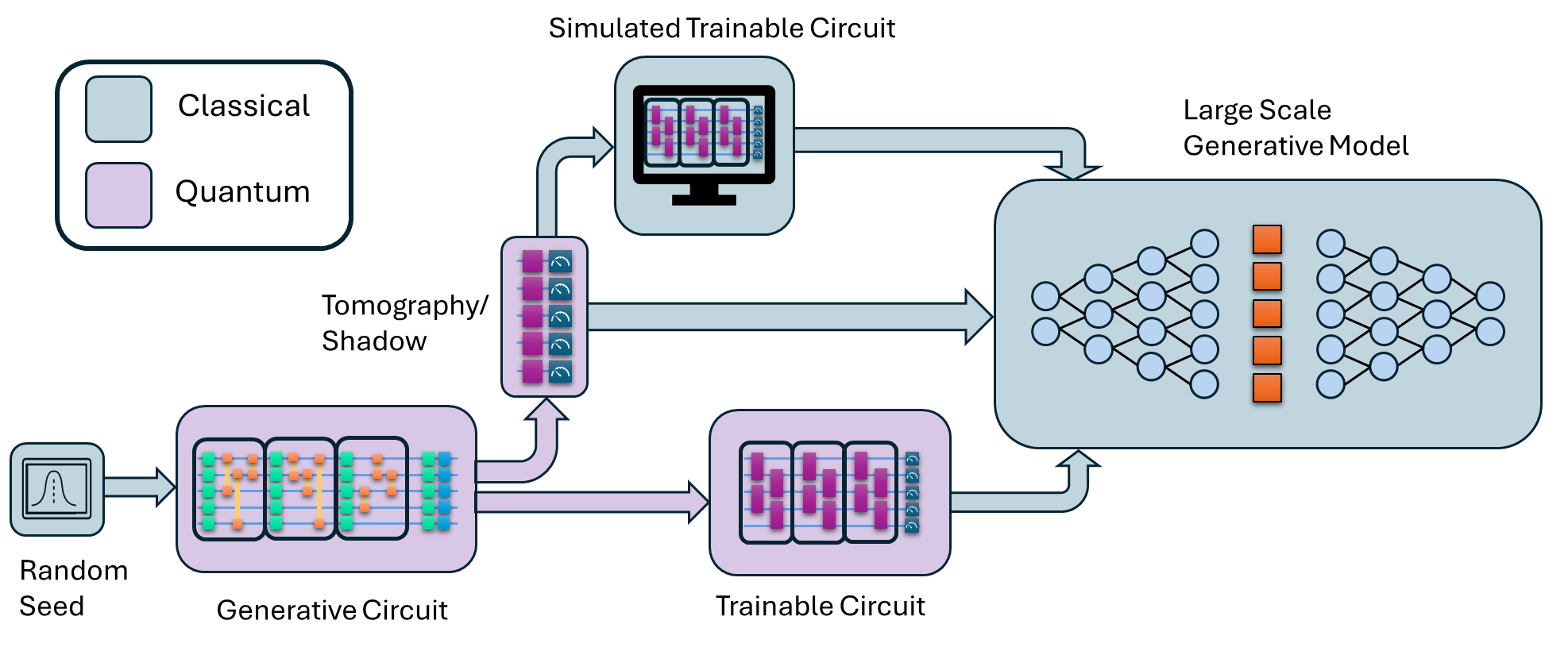}
    \caption{Our proposed use of the generative circuits described in the main text. After the classical random seed is transformed by the generative circuit, the resulting state is then passed through a trainable layer. Alternatively, a classical representation is obtained, which may additionally be trained via a classical simulation. The final step is to feed the transformed information into a large generative model e.g. a GAN, which can scale up to realistic problem sizes.}
    \label{fig:Model_Structure}
\end{figure}

\section{Model Structure}\label{sec:model_structure}
\subsection{Model Outline}
 In our generative model, the quantum circuit has its parameters randomly chosen with each run. The features of the model consist of freely chosen $k$-local observables $O_i$, such as Pauli strings, transforming the probability distribution of the parameters into a different distribution of data. Whilst, at this point, it would be possible to directly feed the results of this transformation to some classical machine learning model, we additionally consider applying another quantum circuit (the \emph{trainable circuit}) with trainable parameters to the output state of initial circuit (the \emph{generative circuit}). If we wish to use a shadow protocol in order to train the parameters of the trainable circuit classically, the trainable part can be absorbed into the observables, as shown in Figure \ref{fig:Model_Structure}.\\
 
When choosing the structure of the generative circuit, we need to ensure that the states produced by it can be locally distinguished from each other (avoiding mode collapse). For the trainable circuit, we additionally need to avoid barren plateaus. The choices we make are known for avoiding barren plateaus (\cite{Zhang_Gauss_2022,Park2023,Wang2023,cerezo2021cost,Khatri2019,Zhao2021,Liu2022,Letcher2023,Basheer2022,Suzuki2023}): for the generative circuit we choose a form of small-angle initialisation and  for the trainable circuit a shallow hardware efficient ansatz (HEA).

\subsection{The Generative Circuit}\label{subsec:generative}
We choose our generative circuit structure to have two imporant properties. Firstly, the distribution $\mathrm{P}_{\Gamma}[\rho=\rho(\gamma)]$ should avoid mode collapse by having large enough variance of local observables Tr($O_i\rho(\gamma)$). In Section \ref{subsec:area_defs} we shall see that this is actually sufficient for our choice of trainable model to avoid barren plateaus. 
Secondly, the reduced density matrices of $\rho(\gamma)$ should (with high probability) not be possible to be approximated classically efficiently. The second condition comes as a consequence of our trainable operators $V^\dagger(\theta)O_i V(\theta)$ being classically simulable; if we could additionally approximate $\rho(\gamma)$ efficiently classically on the support of each operator, our entire model could be evaluated classically, preventing any possible quantum advantage.\\

In addition to the above requirements, we also aim to make our circuit as amenable as possible to execution on real quantum hardware, by choosing them to have sublinear depth with high probability. The structure we choose for our generative circuit consists of $L = O(\log n)$ layers, each of which consists of a round of single qubit $R_X$  rotations with random angles $\gamma_{l,j}$ independently drawn from a Gaussian distribution of constant variance $\tau^2$, followed by CZ gates applied between some randomly chosen pairs of qubits. After these layers, a final round of $R_X$ and then $R_Y$ rotations are applied to each qubit . This is the same structure as studied in \cite{Zhang_Gauss_2022}. In principle we could choose any entangling layer diagonal in the Pauli Z basis without compromising our guarantee of avoiding mode collapse, but we select them to be CZ gates for the ease of analysis and implementation.\\

\begin{figure}
\includegraphics[scale=0.8]{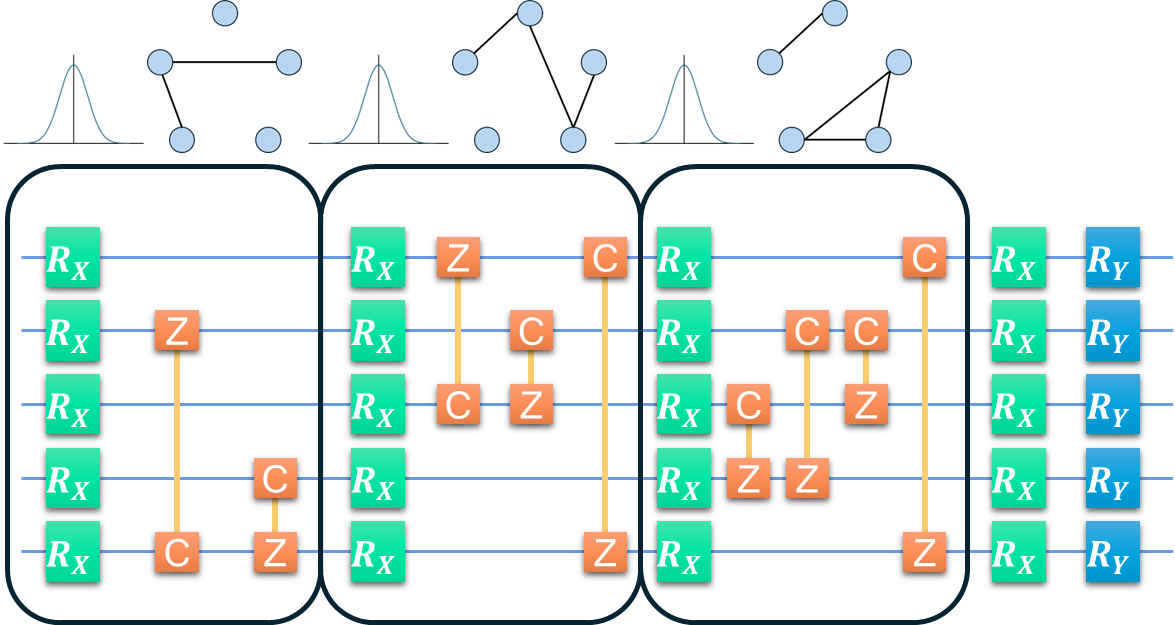}
\caption{Our proposal for a generative circuit, introduced in Section \ref{subsec:generative}, has $L$ layers ($L=3$ pictured here), each consisting of parametrized $X$-rotations on each qubit followed by CZ gates between randomly chosen pairs of qubits. At the end, both parametrized $X$- and $Y$-rotations are applied on each qubit. Parameters are chosen according to a constant variance distribution, whilst the $CZ$ gates are chosen by a $G(n,\log(n)/n)$ Erdős–Rényi graph.}
\label{fig:generative_part}
\end{figure}

The properties of the states created by this model will depend heavily on how the variance $\tau^2$ scales with the number of qubits $n$ and how the distribution determining the placement of CZ gates is defined; in order to satisfy our requirements above, we will choose the variance $\tau^2$ of the RX rotation angles $\gamma_{l,j}$ to be a constant independent of $n$ (see Appendix \ref{app:variance} for a detailed explanation of why this is possible), and the CZ gates in each layer to be drawn according to a $G(n,p)$ Erdős–Rényi graph with $p=\log(n)/n$. In Sections \ref{sec:barren_plateau} and \ref{sec:classical_simulation} we describe the justification for these choices in more detail, but we provide the intuition for them here briefly.\\

The variance is chosen to be constant size, which still avoids mode collapse as well as ensures that the distribution of states is polynomially distinguishable from the maximally mixed state on a subset of $\abs{\Lambda}=O(\log n)$ qubits, which in turn prevents barren plateaus for our trainable observables in Section \ref{subsec:Trainable}. It also means, however, that the values of $\sin(\gamma_{l,j})$ remain large enough to prevent one classical method of simulation, Pauli propagation, to efficiently estimate the state $\rho(\gamma)_{\Lambda}$. This is because this algorithm propagates the observable through the Heisenberg evolution of the circuit in the Pauli basis, dropping terms with sufficiently small weight. 
There are multiple ways to choose the weights to drop, but many of the choices depend on the parameters of the circuit scaling down as the size increases in order to be provably effective. For example, if the low weight is considered to arise from the build up of $\sin(\gamma_{l,j})$ terms when decomposing Pauli rotations, larger angles mean that the truncation error is more significant. This is because having $k$ factors of $\sin(\gamma_{l,j})$ makes the size of the coefficient $(c_\tau)^k$ times smaller on average, where $c_\tau < 1$ is a constant depending on $\tau$. In order to guarantee polynomial precision for constant sized parameters, the truncation methods are forced to keep track of a number of terms growing faster than polynomially (in the case of sine coefficients, the cutoff $k$ scales $\sim \log n$), and thus need superpolynomial runtime $O(n^{\log \varepsilon^{-1}})$.

The layout of CZ gates does not influence the provable absence of mode collapse. Instead it prevents efficient classical approximation via tensor network techniques \cite{Markov_Shi_2008}. Along with Pauli propagation, tensor networks belong to the most effective classical techniques for small-angle circuits.
Our CZ gate layout is chosen such that the light cone of an observable is of macroscopic size with high probability; and that the connectivity graph captured by that light cone has a treewidth of $\Theta(n)$. This prevents efficient simulation. Furthermore the individual CZ layers are sufficiently sparse such that with high probability, they can be transpiled onto a square grid architecture with depth $O(\sqrt{n}\log n)$ \cite{bremner2017achieving}. This sublinear scaling is promising for experiments run on NISQ devices. Our choice is a compromise between having denser CZ graphs to make tensor network contraction harder and sparser graphs being easier to implement physically, and other choices ensuring high treewidth are possible.\\

These choices ensure that our generative circuit gives sufficient detectable variation to the trainable circuit to be of use in a generative model; whilst remaining robust against the two best-known classical simulation techniques. This makes it a prime candidate to provide a bias to generative models which cannot be efficiently reproduced classically.

\subsection{The Trainable Circuit}\label{subsec:Trainable}
As stated before, the addition of a trainable circuit to the model should be considered optional, and more akin to an additional ``quantum inspired" layer, even if quantumly implemented.
Our main concern is to choose for our trainable circuit family a $V(\theta)$ that does not suffer from the problem of barren plateaus.  As outlined in \cite{Lie_Algebra_24,Lie_Algebra_24_2}, this is not merely a problem of choosing the right structure of $V(\theta)$, but requires choosing suitable operators $O_i$, as well the input states $\rho(\gamma)$ from our generative circuit.\\

In addition to an absence of barren plateaus, we choose our observables to be evaluable classically (given a suitable classical description of the input state $\rho(\gamma)$), hence us stressing that the contribution of the trainable component of the model should be interpreted as quantum inspired, rather than offering additional non-classicality. Being able to evaluate the observables classically is not a property necessary for a useful quantum generative model, but rather motivated by the observation from \cite{cerezo2024} that the absence of barren plateaus often leads to such classical simulation, and the pragmatic view that classical training can take advantage of noiseless calculations, parallelism and back-propagation. \\

\begin{figure}
\includegraphics[width=\linewidth]{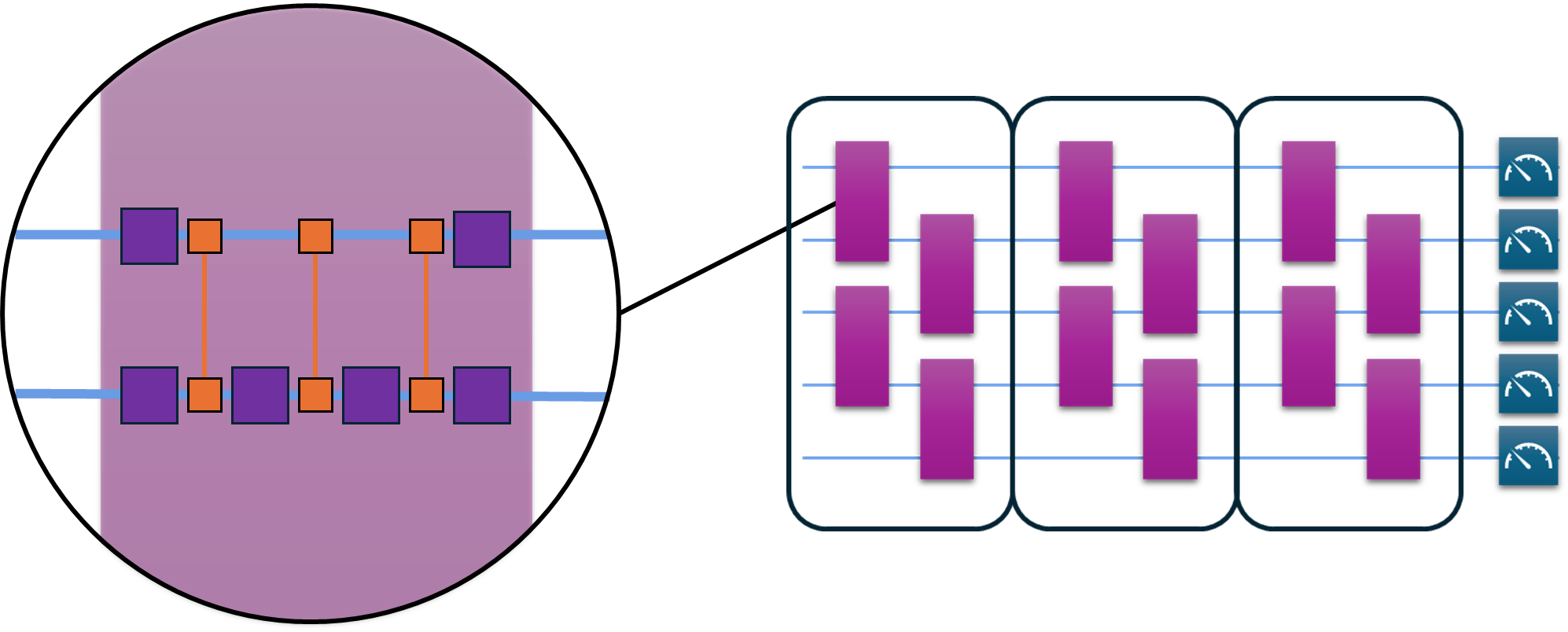}
\caption{The trainable circuit within our model (see Section \ref{subsec:Trainable}) is composed of general parametrized 2-qubit gates arranged in an alternating pattern. These gates can be decomposed further to suit the hardware and/or classical simulation method (a stylised example is shown inside the cutout).}
\label{fig:trainable_part}
\end{figure}

We will choose our observables $O_i$ to be $k$-local operators, and $V(\theta)$ to be an HEA of depth $O(\log n)$. Specifically, it consists of alternating layers of fully-parameterised two-qubit unitaries on neighbouring pairs of qubits on a 1d chain (the ``brick layer ansatz"), the structure considered in e.g. \cite{cerezo2021cost, Leone_HEA_2024}. With this ansatz, one can follow the light cone of a Heisenberg evolved operator $V^\dagger(\theta)O_i V(\theta)$, and see that it can only act non-trivially on at most $O(k\log n)$ qubits, and is thus an element of a real Hilbert space of dimension $4^{O(k\log n)} = O(n^{Ck})$ for a constant $C$ dependent on the exact choice of system size. This element can be explicitly represented classically and, when $k$ is fixed, gives a  polynomial (and thus classically efficient) description of our operator. If we can obtain a classical description of $\rho(\gamma)$ on the relevant $O(k\log n)$ qubits, then we can update our $\theta$ classically. This is exactly the use case that classical shadows \cite{HuangShadow2020} are useful for \cite{Basheer2022}.\\

\section{Avoiding Mode Collapse and Barren Plateaus}\label{sec:barren_plateau}

\subsection{Mode Collapse as Flatness of Distribution}

One key property that we want from the generative circuits $U(\gamma)$, illustrated in Figure \ref{fig:generative_part}, is that the continuous outputs they produce are distinguishable from each other. Otherwise, we run the risk of producing the same feature vector regardless of the random parameters chosen. If then fed into a generative model, this would lead to a problem known as mode collapse, where the model learns to produce a single (or small number of) high quality data, rather than a diverse distribution. This would be due to to our input randomness being ``lost" by our quantum circuit, rather than meaningfully transformed.\\

Since we want our quantum generator to be efficient, we  limit ourselves to a polynomial number of shots. If the observables are exponentially concentrated around their mean over the distribution of $\gamma\sim \Gamma$, then with high probability a polynomial number of measurements cannot differentiate between two data points and the model output is effectively constant, i.e. mode collapse. The only randomness we obtain then comes directly from the measurements (shot noise). Since this is effectively uncorrelated Gaussian noise, we could have obtained the same randomness through much simpler classical methods.  \\

Formally, we can define our instance of mode collapse through a family of output function $f_n(\gamma):=\mathrm{Tr}[O(\gamma)\rho_n]$ for each circuit size $n$, with the input $\gamma\sim \Gamma$ being the classical randomness that is fed into the model.\\

\begin{definition} A function family $f = f_n(\gamma)$ suffers from mode collapse if
\[
\mathrm{Var}_{\gamma \sim \Gamma}[f_n(\gamma)] \in o(n^{-k})
\]
for all $k$.
\end{definition}
This definition is effectively stating that our chosen function concentrates at a faster-than-polynomial rate, and should be thought of as  directly akin to a barren plateau caused by random initialisation of trainable parameters.\\

Preventing this mode collapse is a necessary challenge for a continuous quantum generative model to overcome, as it prevents gaining any advantage from randomly chosen parameters. We are able to show that the  generative circuit we propose avoids this mode collapse, i.e. has polynomially large variance, for local observables. The proof of this can be found in Appendix \ref{app:variance}.\\

\begin{theorem}\label{th:main}
Let $O$ be a $k$-local observable. Then the variance of $O$ is polynomially large over our choice of generative circuit, meaning that 
\[
\mathrm{Var}_{\gamma \sim \Gamma}[ \mathrm{Tr}(OU(\gamma)\rho_0U(\gamma)^\dagger)] \geq \frac{1}{\beta(n)},
\]
where the exponent of $\beta(n)\in\mathrm{poly}(n)$ scales with respect to the locality $k$ as $O(k)$.
\end{theorem}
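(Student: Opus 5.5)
We will prove the bound by working in the Heisenberg picture. We expand the back-propagated observable in the Pauli basis and exploit the independence of the Gaussian angles, in the spirit of the Gaussian-initialisation analysis of \cite{Zhang_Gauss_2022}. Write $\rho_0=|0\rangle\langle 0|^{\otimes n}$ and $O=\sum_P a_P P$, a sum over Pauli strings supported on at most $k$ qubits. The quantity of interest is
\[
f(\gamma)=\mathrm{Tr}\!\left[U(\gamma)^\dagger O U(\gamma)\rho_0\right].
\]

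\textbf{Step 1: reduce the variance to one contribution.} The rotation layers are $R_X(\gamma)=\cos(\gamma/2)I-i\sin(\gamma/2)X$ (and similarly for $R_Y$), and the CZ layers map Pauli strings to Pauli strings. Hence $U^\dagger P U$ is a sum over Pauli paths. Each coefficient is a product of $\cos\gamma_{l,j}$ and $\sin\gamma_{l,j}$ factors, and only strings in $\{I,Z\}^{\otimes n}$ at the input survive against $\rho_0$. Because the angles are independent Gaussians, we compute $\mathbb{E}[\cos^2\gamma]=\tfrac12(1+e^{-2\tau^2})$, $\mathbb{E}[\sin^2\gamma]=\tfrac12(1-e^{-2\tau^2})$ and $\mathbb{E}[\sin\gamma\cos\gamma]=0$. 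This makes distinct paths pairwise orthogonal in $L^2(\Gamma)$ after centring, up to the usual subtlety that two different paths can share the same angle multiset.

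To avoid that bookkeeping, we will instead lower bound $\mathrm{Var}f$ by isolating the last layers. Condition on all angles except the final $R_X,R_Y$ angles on the support of $O$ together with the $R_X$ angles of the last generative layer on that support. The law of total variance gives $\mathrm{Var}f\ge \mathbb{E}[\mathrm{Var}(f\mid\text{rest})]$.

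\textbf{Step 2: the conditional variance is a trigonometric polynomial.} Conditionally, $f$ is a trigonometric polynomial in $O(k)$ independent angles. The final $R_Y R_X$ layer is a product of single-qubit rotations on the support $S$ of $O$, $|S|\le k$. For each fixed non-identity $P$ on $S$, the map $(\alpha,\beta)\mapsto R^\dagger P R$ spans all local non-identity Paulis with coefficients that are products of $\cos$ and $\sin$ of the final angles. Hence
\[
\mathrm{Var}(f\mid\text{rest})\ \ge\ c_\tau^{\,k}\sum_{Q\neq I} |b_Q|^2,
\]
where the $b_Q$ are the Pauli coefficients of the reduced state $\sigma_S$ of $\rho(\gamma)$ before the last rotations, and $c_\tau>0$ depends only on $\tau$. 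This step is a finite-dimensional computation on $k$ qubits: we show the Gram matrix of the functions $\alpha,\beta\mapsto \mathrm{Tr}[R^\dagger Q R\, P]$ under Gaussian measure is nondegenerate with smallest eigenvalue at least $c_\tau^k$. Applied to $O$ with normalised nonidentity part, this gives
\[
\mathrm{Var}f\ \ge\ c_\tau^k\,\|O\|_{2,\text{traceless}}^2\;\mathbb{E}\!\left[\mathrm{Tr}\sigma_S^2-2^{-k}\right]\cdot 2^{k}.
\]

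\textbf{Step 3 (main obstacle): the reduced state is far from maximally mixed.} It remains to show that $\mathbb{E}[\mathrm{Tr}\,\sigma_S^2]-2^{-|S|}\ge 1/\mathrm{poly}(n)$ with the exponent linear in $k$. This is where depth $L=O(\log n)$ and the CZ graph enter. We will lower bound the purity by the contribution of the single path in which every angle in the backward light cone acts as $\cos$, i.e. the $Z$-string on $S$ is never rotated. A $Z$ commutes with CZ, so this path stays a $Z$ string supported on $S$ through every layer. Its coefficient is $\prod_{l\le L,\,j\in S}\cos\gamma_{l,j}$, whose second moment is
\[
\left(\tfrac12(1+e^{-2\tau^2})\right)^{kL}=n^{-O(k)} \quad\text{for } L=O(\log n).
\]
The difficulty is showing the remaining paths do not cancel this term in expectation. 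We handle this by taking $\mathbb{E}[b_{Z_S}^2]$ and showing that the cross terms vanish. The other contributing paths necessarily contain at least one $\sin$ factor, paired with a $\cos$ of the same angle elsewhere or appearing an odd number of times, so they have zero mean against the all-cosine path. The squared terms of those paths are nonnegative. Averaging over the Erdős–Rényi CZ graphs is then harmless, since the bound holds for every graph. Combining the three steps yields $\mathrm{Var}f\ge n^{-O(k)}$, i.e. $\beta(n)=n^{O(k)}$.
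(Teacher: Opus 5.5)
\textbf{The gap is in Step 2.} The inequality $\mathrm{Var}(f\mid\text{rest})\ge c_\tau^{k}\sum_{Q\neq I}|b_Q|^2$, the claimed nondegenerate Gram matrix, and the purity bound you derive from them are all false, so this step cannot be carried out as planned.

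The final layer applies $R_X$ and then $R_Y$. In the Heisenberg picture $R_Y$ commutes with $Y_j$, so $R^\dagger Y_j R=\cos\alpha\,Y_j\pm\sin\alpha\,Z_j$. Hence $(\alpha,\beta)\mapsto\mathrm{Tr}[R^\dagger Y R X]$ vanishes identically and your Gram matrix is singular. Concretely, for $O=Y_j$ and a pre-rotation state with Bloch vector along $X$, you get $\mathrm{Var}(f\mid\text{rest})=0$ although $\sum_{Q\neq I}|b_Q|^2>0$.

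More generally, for a Pauli string $O$ the image $R^\dagger OR$ contains only strings with exactly the support of $O$. Take $O=Z_1Z_2$ and $\sigma_S=\tfrac14(I\otimes I+b\,Z\otimes I)$: then $f\equiv 0$, while $\mathrm{Tr}\,\sigma_S^2-\tfrac14=b^2/4>0$. So the purity of $\sigma_S$ is not what controls the variance, and your displayed bound in terms of $\mathbb{E}[\mathrm{Tr}\,\sigma_S^2-2^{-k}]$ fails. A smaller point: leaving the layer-$L$ angles unconditioned makes $\sigma_S$, and hence the $b_Q$, random under your conditioning. Condition on all of layers $1,\dots,L$ instead.

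\textbf{The repair.} What is true, and is all Step 3 actually uses, is a bound for the single coefficient $b_{Z_S}=\mathrm{Tr}(Z_S\sigma)$ when $O=P$ is a Pauli string with support $S$. Write $R^\dagger PR=\sum_Q g_Q(\alpha,\beta)\,Q$ with $g_Q=\prod_{j\in S}g^{(j)}_{Q_j}$. On each qubit the functions $g^{(j)}_q$ are pairwise orthogonal in $L^2$, and only $q=P_j$ has nonzero mean. Therefore the centred $g_Q$ are uncorrelated, and $\mathrm{Var}(f\mid\gamma_1,\dots,\gamma_L)=\sum_Q b_Q^2\,\mathrm{Var}(g_Q)\ge b_{Z_S}^2\,\mathrm{Var}(g_{Z_S})$. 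The $Z$-coefficient on qubit $j$ is $\cos\alpha\cos\beta$, $\pm\sin\beta\cos\alpha$ or $\pm\sin\alpha$ for $P_j=Z,X,Y$ respectively, which gives $\mathrm{Var}(g_{Z_S})\ge c_\tau^{|S|}$.

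Your Step 3 bound $\mathbb{E}[b_{Z_S}^2]\ge[\tfrac12(1+e^{-2\tau^2})]^{|S|L}$ is correct; it is the paper's $\cos^2$ argument from Theorem \ref{th:sqr_expec} in path language. One wording fix: what is nonnegative is the square of the sum of the remaining paths, not a sum of their individual squares, since those paths can be mutually correlated. With these changes the argument yields $n^{-O(k)}$.

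\textbf{Comparison with the paper.} Once repaired, your route genuinely differs. The paper computes the mean exactly (Theorem \ref{th:expectation}) and lower-bounds the full second moment through all $L+2$ rotation layers. For $Z$-strings it then subtracts the squared mean via $a^N-b^N\ge N(a-b)b^{N-1}$. Your law-of-total-variance step instead treats $Z$-strings and strings containing $X$ or $Y$ uniformly and never needs the mean. The paper's route, in exchange, produces the mean explicitly, which it reuses as the scale factor in the numerics.

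\textbf{Scope.} Like the paper (Corollary \ref{cor:variance}), you should restrict to Pauli observables. The statement fails for $O\propto I$. For sums of Pauli strings with different supports the centred $g$'s become correlated, so your normalisation by $\|O\|_{2,\text{traceless}}$ would need a separate argument.
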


The key idea behind the proof is analysis of the circuit in the Heisenberg picture i.e. by applying gates to the observable, and noticing that the leading term becomes essentially the product of $kL$ cosines. The proof is based on the approach from \cite{Zhang_Gauss_2022} that was previously refined in \cite{changLatentStylebasedQuantum2024}. In our proof, the only assumption that is required from the entangling gates is that they are diagonal in the $Z$-basis, meaning one obtains not only our model but a large family of circuits that all avoid mode collapse. For other choices of depth $L$, one needs to set the variance to be $\tau^2 = O (\log n / L)$, rather than constant, for circuits with $n$ qubits in order for our proof to work.\\

At this point we pause to note that avoiding mode collapse is only one of the necessary steps for the quantum continuous generative model to be useful. The next necessity, which is that the circuits should not be easy to classically simulate, is the focus of Section \ref{sec:classical_simulation}. Even then however, it may be possible that the feature vector distribution $\vec{f}(\gamma)\sim \Gamma$ is more easily obtained via classical means (something we would call ``classical surrogation"). Even if this is not possible, one would still need to find a dataset for which our quantum feature distribution has a favourable bias. These issues are outside the scope of this work, and there exist plenty of variations of quantum generative models to stress test. These variations can be done by substituting different entangling layers as discussed above, but also through ideas such as correlating the random angles. This is a technique that has seen success in both increasing expressivity (data reuploading) \cite{PerezSalinas2020datareuploading,schuld21}, and barren plateau prevention (correlated initialisations) \cite{Grant2019initialization,Volkoff_2021}.


\subsection{Connection between Mode Collapse and Barren Plateaus}\label{subsec:area_defs}

Our definition for mode collapse strongly resembles the definition of a barren plateau, which happens when
\[
\mathbb{E}_{\theta\sim \Theta} [ |\nabla f(\theta)|] \in o(n^{-k}).
\]
In fact this connection can be made more explicit since, for a general class of widely-used cost functions, the above condition is equivalent to an exponential concentration of the loss function itself \cite{Arrasmith_2022},
\[
\mathrm{Var}_{\theta\sim \Theta} [ f(\theta)] \in o(n^{-k}).
\] 
This is not coincidental; what we call mode collapse is essentially a generative version of a barren plateau: in both cases, we cannot access the relevant values up to needed precision efficiently due to inherent randomness of measurement.\\



In the scenario where we additionally wish to use the states produced by the generative circuit as initial states for our trainable circuit, we need to ensure that those states do not induce barren plateaus in the final output. Since we are not considering here what further happens after the trainable circuit (i.e. a full pipeline as shown in Figure \ref{fig:Model_Structure}), we limit ourselves to preventing barren plateaus in the features produced by the trainable layer, which are expectation values of local observables. As we chose to consider a shallow 1D hardware efficient ansatz, from \cite{Leone_HEA_2024} we know that our generative states should satisfy a condition that is typically called an area law in barren plateau literature, but we feel would be more aptly named a ``subvolume law", since it is strictly weaker than the well-known area law in many-body physics. We discuss this more in Appendix \ref{app:barren}.\\

Fortunately for us, as a consequence of avoiding mode collapse, we also obtain that the states produced by our generative circuit do satisfy the subvolume law, making them suitable initial states for our choice of trainable circuit. This is due to the fact that the local observables (on the input state) have inverse polynomially large variance, meaning those same variables can be used to locally distinguish the state from the maximally mixed state. This is the key subvolume law property needed to prevent a trainable barren plateau - we go into more details about this in Appendix \ref{app:barren}.\\

Although a prevention of mode collapse is sufficient to satisfy the subvolume law, and thus prevent a barren plateau, having subvolume law input states is not a guarantee of mode collapse prevention. This is because the input state could be exponentially concentrated around some fixed, non-maximally mixed state.\\

Finally, we acknowledge that this analysis does not consider another important cause of barren plateaus, noise in the quantum circuit. This issue can be prevented by offloading the training to a classical computer, although hardware noise could still lead to mode collapse.

\section{Classical simulation methods}\label{sec:classical_simulation}

In order to analyse the effectiveness of classical simulation methods, we should first establish the performance of an ideal quantum computer in estimating the observables. Running a circuit with $n$ qubits and $O(\mathrm{poly}(n))$ gates takes $O(\mathrm{poly}(n))$ time, whilst estimating the expectation of a single observable up to precision $\varepsilon$ with probability $1-\delta$ requires $O(\varepsilon^{-2}\log \delta^{-1})$ measurements. That means that if a classical algorithm is able, with probability $1-\delta$, to find an estimate of an observable for that same circuit with precision $\varepsilon$ in time $O(\mathrm{poly}(n,\varepsilon^{-1},\log \delta^{-1}))$, then from a complexity theoretical perspective there is no quantum advantage to be found in the model. In practice, the exact polynomial scaling and pre-factors may play a significant role.\\
 
In this section, we discuss two methods for classically simulating quantum circuits, Pauli propagation \cite{PhysRevA.99.062337,shao2024simulating,angrisani2024classically} and tensor networks \cite{PhysRevLett.91.147902,cirac2009tensor,Markov_Shi_2008}. From our understanding of the literature, these are the two methods most applicable to our generative circuits, but we acknowledge there exists a wide suite of approaches \cite{XU20254104}. Even for the two methods we choose to consider, there exists a rich body of literature, and we have chosen specific implementations of these methods we feel representative of the overall difficulty, rather than what may prove to be the absolute optimal approach. We encourage the community to challenge our assertions, in order to situate the border of classical power as clearly as possible. \\

The following two subsections are each dedicated to a particular simulation method. We begin with a brief description of the simulation approach, and then provide the intuition behind how we have chosen our generative quantum model, in order to be difficult for that technique. We also provide numerical simulations which compare the exact observable values (features), obtained using statevector simulation, with those predicted using classical simulation methods.
For the numerical simulations, we randomly generated 100 instances of generative circuits of sizes 5, 10, 15, 20 and 25 qubits, using angle variance $\tau^2 = 1/9$. The observables we consider are single qubit Z-observables, and for each circuit instance we focus on the maximum error over all features. This is because we demand of our classical simulation technique the ability to approximate any chosen observable efficiently, which is not necessarily the case if the average error over all sites (features) is small.\\

Briefly summarising the results, we find that the limitations of tensor network methods are quite easy to demonstrate numerically, whereas the results for Pauli propagation are more subtle. Nevertheless, the evidence points to polynomial time scaling methods being unable to increase their precision beyond a given threshold, in contrast to the ideal quantum computer. Since we are limited in size by the statevector simulation, we believe it would be interesting to extend this study by finding a related circuit class which can be easily simulated with tensor network approaches, but whose Pauli propagation difficulty still reflects that of our generative model.

\subsection{Pauli Propagation}\label{subsec:pauli_prop}

In order to understand how we choose our circuit to be difficult for Pauli propagation, we first outline how it works. Pauli propagation is based on the idea of tracking observables written as Pauli sums $O = \sum_j c_j P_j$ through the Heisenberg picture of the circuit i.e. applying gates to the observable. In each step of the algorithm, the current Pauli sum is conjugated by the current gate, yielding the Pauli sum for the next step. For a Clifford gate $C$, all Pauli strings are mapped to Pauli strings via $C^{\dagger}PC = P'$. For a Pauli rotation $R(\gamma)$, all Pauli strings that commute with it stay unchanged, while those Pauli strings $P$ that do not commute are mapped to $R(\gamma)^\dagger PR(\gamma) = \cos(\gamma)P + \sin(\gamma)P'$. This produces two new strings to keep track of.\\

Though the number of terms in the sum grows exponentially, an approximation algorithm obtained by truncating terms can cause the algorithm to run in polynomial time, at the cost of guaranteeing an accurate solution. Truncation choices include dropping the terms with coefficients of small absolute value or ones that have a large number of non-identities in their Pauli string. Dropping small coefficients can lead to the total error being small when e.g. the angles of the circuit are close to zero \cite{Lerch2024}. In practice, multiple truncation strategies are often used in combination, reducing the time and memory overhead of the algorithm. Since the truncation thresholds for these combined algorithms are often chosen heuristically, we limit ourselves to studying single truncation approaches. For an overview of truncation strategies, see Appendix \ref{app:Trunc_Methods}.\\

Because Pauli propagation is well-suited to small-angle circuits, our choice of circuit should try to minimize its effectiveness. For a quantum computer, finding the expectation of an observable up to an error which scales inverse polynomially takes polynomial time, and so we set the goal for Pauli propagation to find the expectation in polynomial time with respect to the number of qubits and inverse error jointly. The estimates from \cite{Lerch2024} for the resources taken by Pauli propagation indicate that this would take quasipolynomial runtime $O(n^{O(\log \varepsilon^{-1})})$, and we sketch below the justification for why poor scaling is expected.\\

\begin{figure}
\centering
\textbf{Pauli Propagation Performance}\par\medskip
\begin{subfigure}[b]{0.45\textwidth}
\centering
\includegraphics[width=\textwidth]{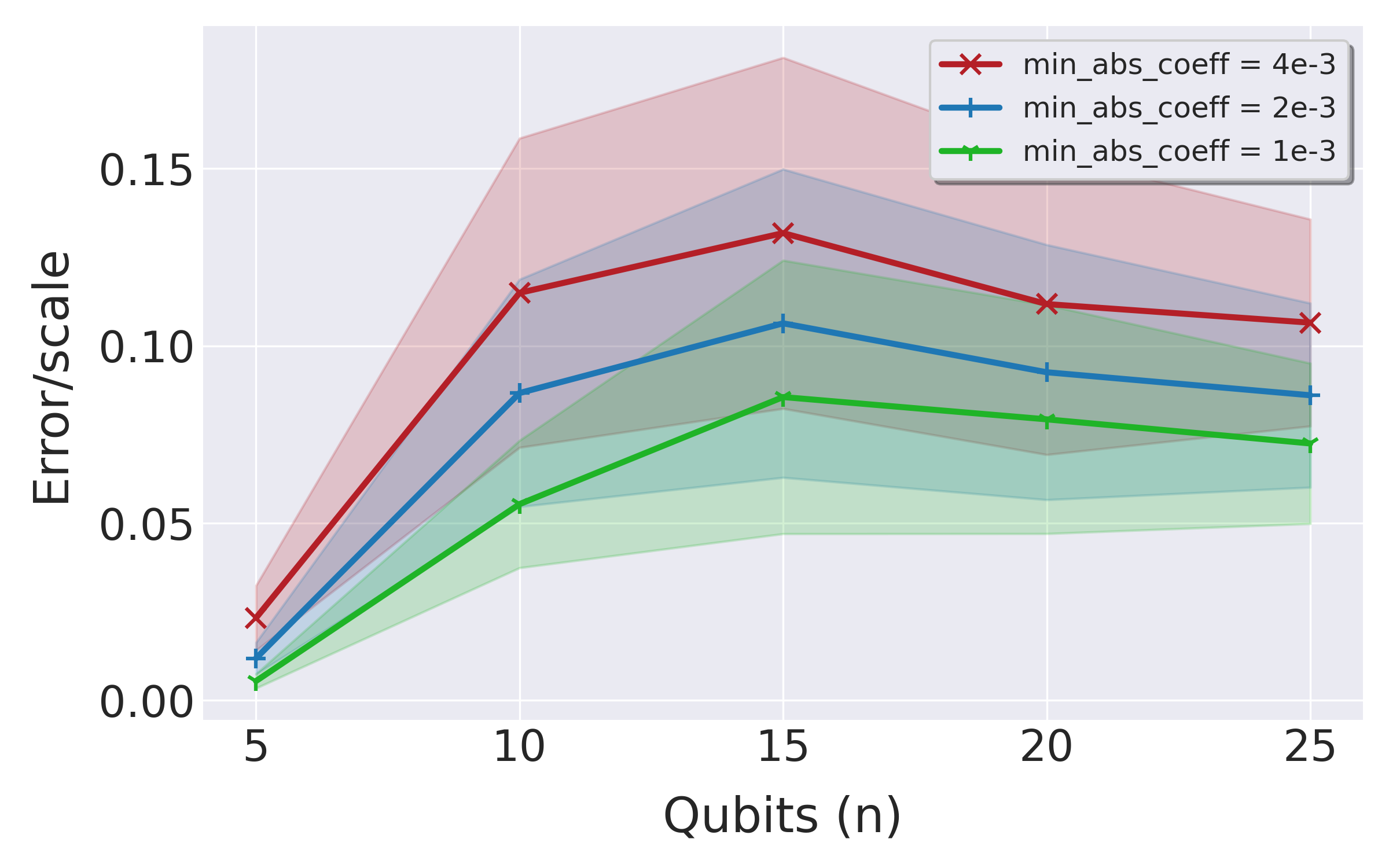}
\caption{Small coefficient truncation}
\label{subfig:pauli_prop_abs}
\end{subfigure}
\hfill
\begin{subfigure}[b]{0.45\textwidth}
\centering
\includegraphics[width=\textwidth]{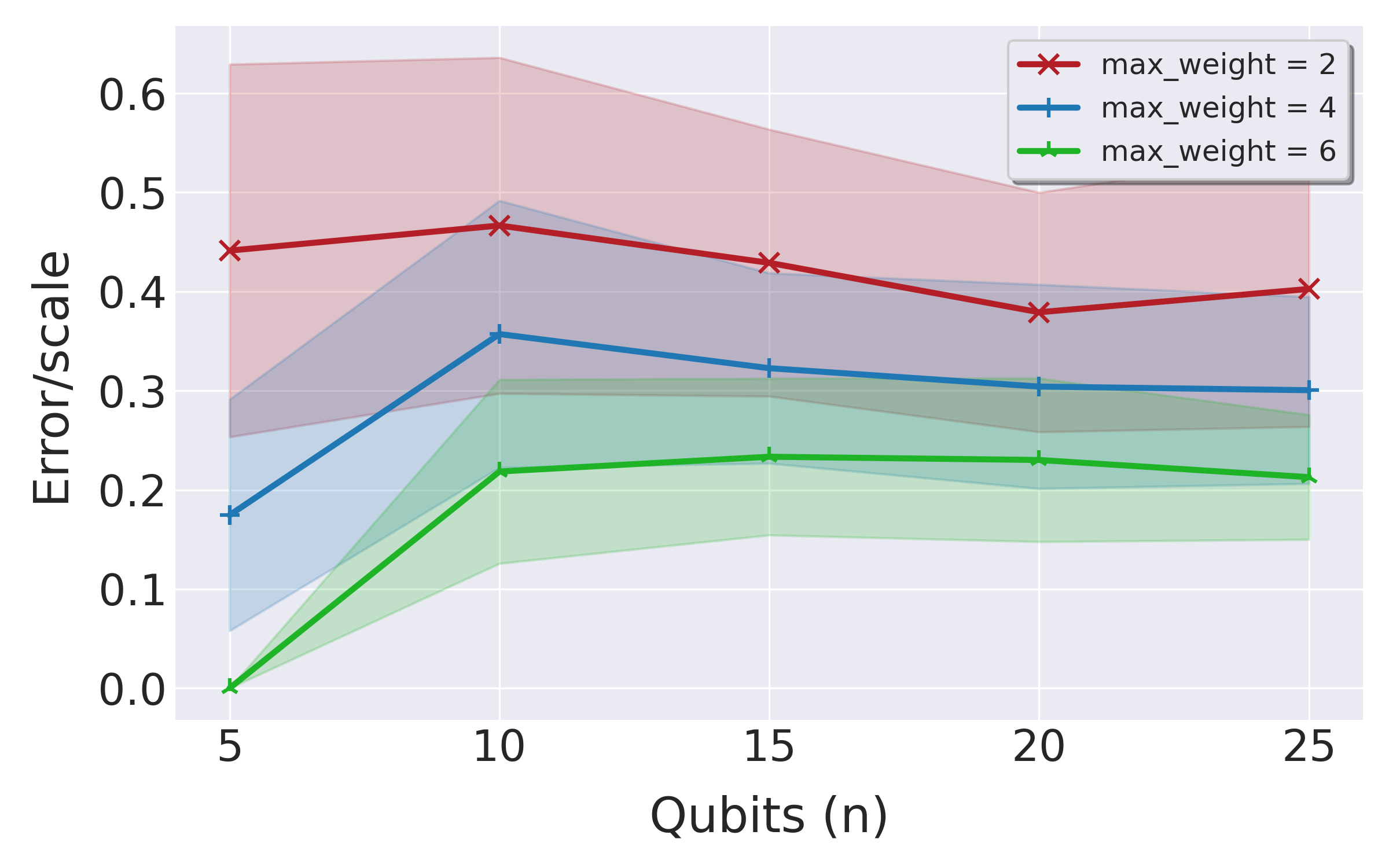}
\caption{Pauli weight truncation}
\label{subfig:pauli_prop_weight}
\end{subfigure}
\\
\begin{subfigure}[b]{0.45\textwidth}
\centering
\includegraphics[width=\textwidth]{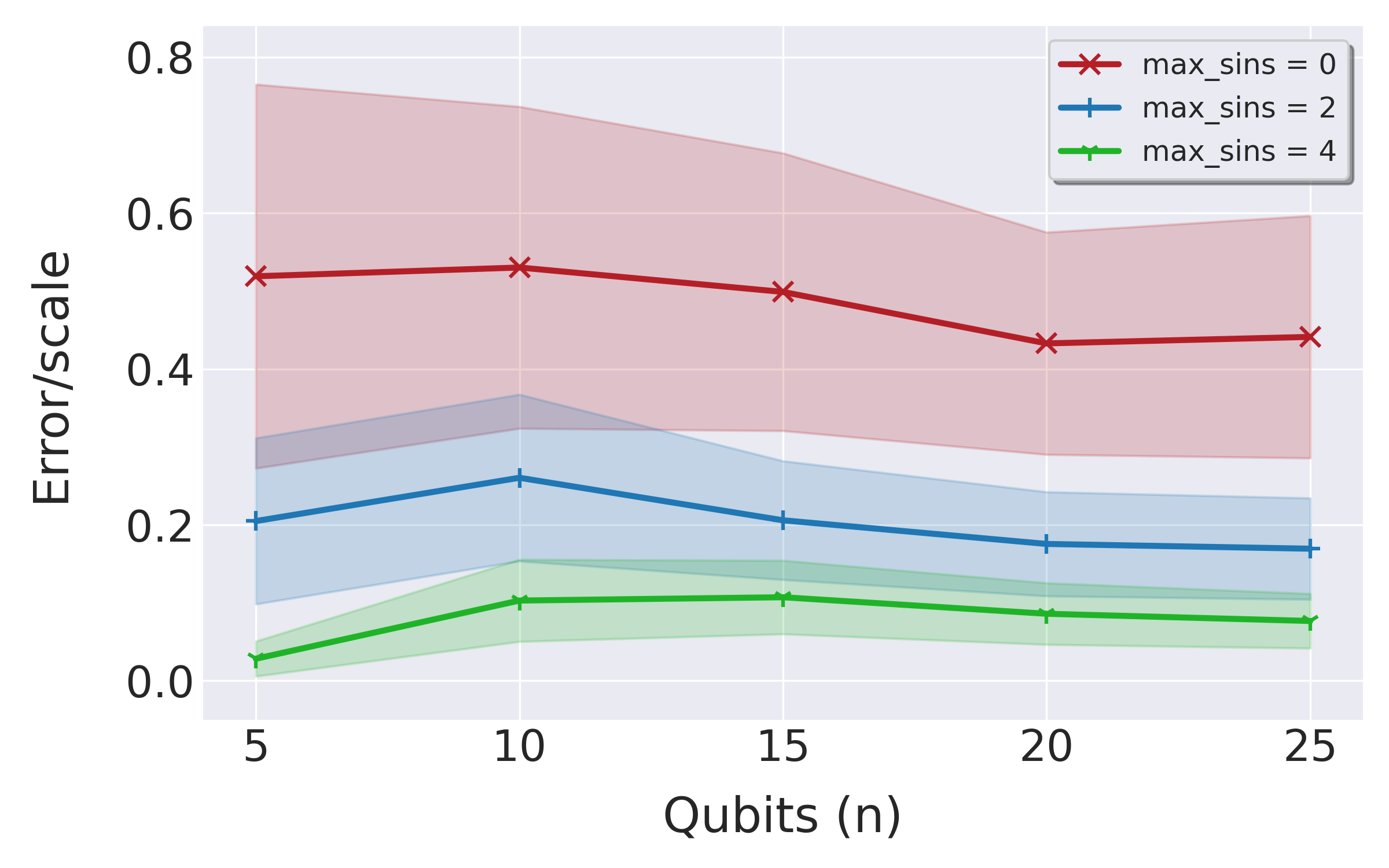}
\caption{Sine truncation}
\label{subfig:pauli_prop_sins}
\end{subfigure}
\hfill
\begin{subfigure}[b]{0.45\textwidth}
\centering
\includegraphics[width=\textwidth]{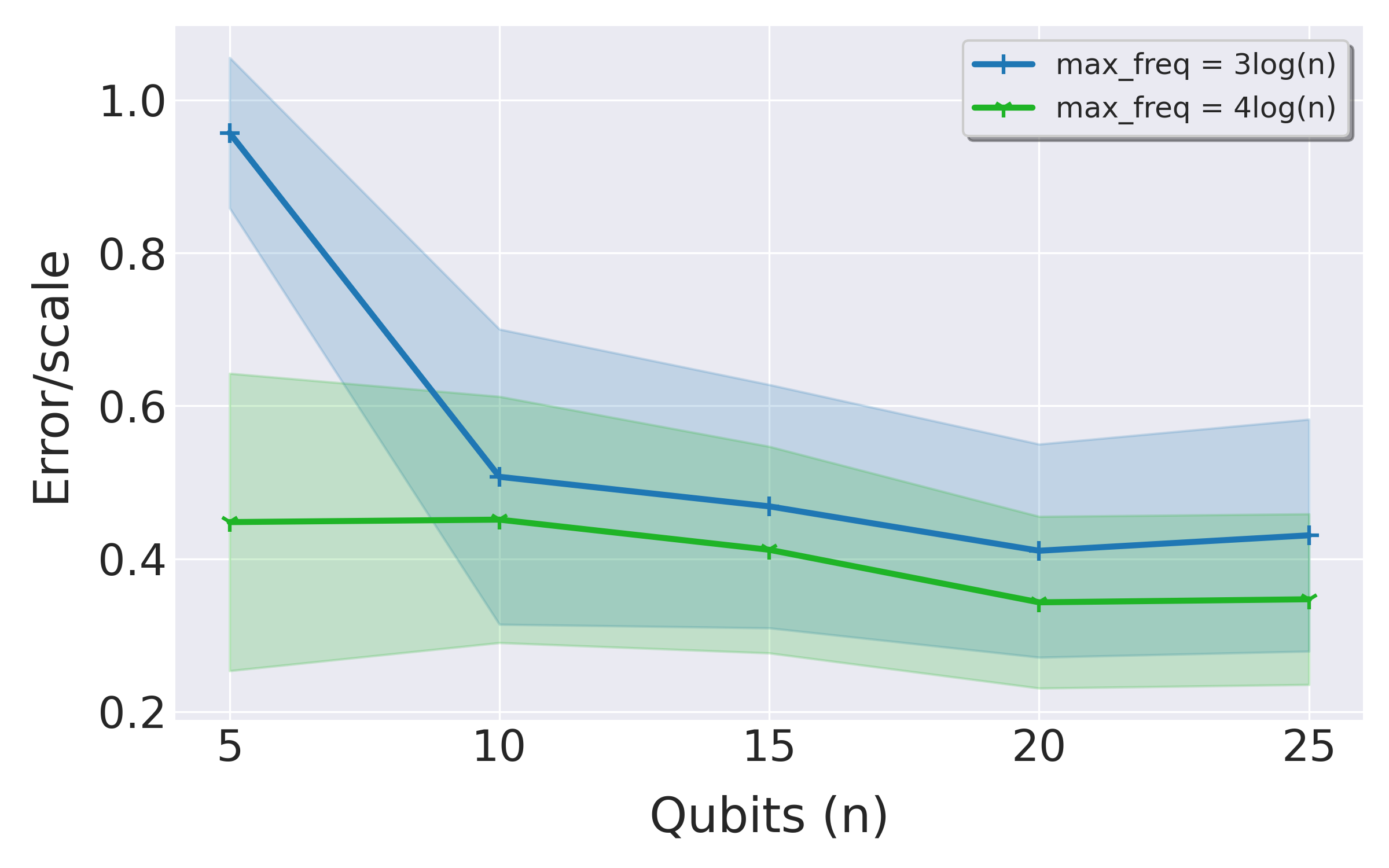}
\caption{Frequency truncation}
\label{subfig:pauli_prop_freq}
\end{subfigure}
\\
\begin{subfigure}[b]{0.55\textwidth}
\centering
\includegraphics[width=\textwidth]{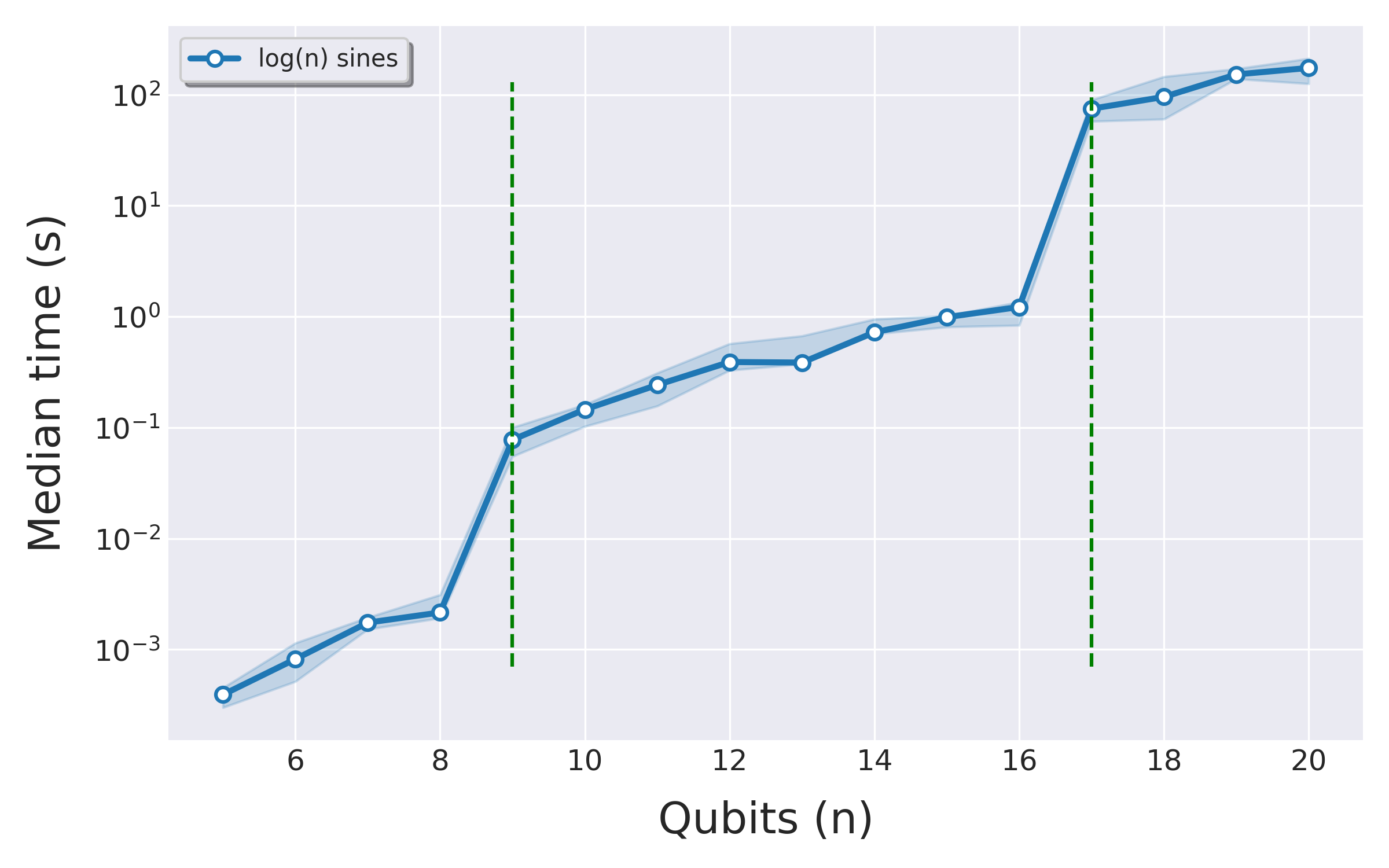}
\caption{Time scaling when increasing cutoff precision with system size.}
\label{subfig:pauli_prop_time}
\end{subfigure}
\caption{Analysis of Pauli propagation approaches to simulating our generative circuits. In \ref{subfig:pauli_prop_abs}-\ref{subfig:pauli_prop_freq}, we present the largest single-qubit Z-observable error for various Pauli propagation truncation methods, scaled proportionally to the average observable magnitude. Each point is averaged over 100 circuit instances, defined according to the generative model in Section \ref{sec:model_structure}, with angle variance $\tau^2 = 1/9$. The shaded regions correspond to one standard deviation of the error. For each method, we plot several truncation thresholds corresponding to an increase in the exponent of the polynomial runtime. The fact that the error relative to the average observable magnitude appears to stabilise implies that any polynomial runtime has a threshold polynomial precision, which it cannot surpass. By contrast, an ideal quantum computer can reach any polynomial precision with a polynomial runtime.\protect\linebreak
\ref{subfig:pauli_prop_time} shows the median runtime on a laptop of the Pauli propagation algorithm with the cutoff at a logarithmic number of sine factors. Each point considers 10 circuits generated in the same manner as those above, with the shaded region showing the area where half of the runtimes lie. The runtime of the algorithm increases rapidly with the number of qubits making this approach impractical for larger instances, even though we expect it to give accurate estimates. The green lines denote where the cutoff increases in integer value, explaining why we observe larger jumps for those instances.}



\label{fig:pauli_prop_comb}
\end{figure}

In our model, the angles of the single qubit rotations are drawn from a Gaussian distribution with constant variance. We chose the constant to be small for the numerical simulations. This ensures that with large probability we have $|\sin(\gamma)| < |\cos(\gamma)|$, and therefore we may keep the small-angle intuition that truncating terms with many sines is a sensible strategy. Intuitively, the size of the truncated term shrinks at the same rate as the overall expectation when keeping only a constant number of sine coefficients, meaning that to obtain better precision requires keeping track of larger products of sine terms, the number of which scales superpolynomially. We show the runtime of this approach for small numbers of qubits in Figure \ref{subfig:pauli_prop_time}, demonstrating the rapid growth we predict. We expect the other most commonly used truncation strategy, ignoring terms with a number of non-identity Pauli operators (the ``Pauli weight") above a given constant, to fail as the system size grows. This is because the entangling layers are likely to create terms with a logarithmic-sized Pauli weight that contribute polynomially to the expectation, which for large enough circuits will be truncated. Keeping track of these logarithmic terms would also lead to superpolynomial scaling.\\\\


In Figure \ref{fig:pauli_prop_comb} we present numerical evidence to support our claim that efficient Pauli propagation techniques cannot match the performance of a quantum computer. We do this by comparing the estimates of observables obtained by Pauli propagation, as implemented in the Julia library presented in \cite{rudolph2025pauli}, to the exact values obtained by statevector simulation for small numbers of qubits. We consider four different truncation strategies. Figure \ref{subfig:pauli_prop_abs} shows small coefficient truncation, which truncates terms when the absolute value of their coefficients is smaller than a specified threshold. Figure \ref{subfig:pauli_prop_weight} considers Pauli weight truncation, which truncates the Pauli terms once they have more non-identities than a given threshold. Figure \ref{subfig:pauli_prop_sins} shows the results for sine truncation, which truncates the terms that have accumulated a large number of sine factors from Pauli rotations. The final approach we test is frequency truncation, shown in Figure \ref{subfig:pauli_prop_freq}, where we truncate terms that have accumulated a large number of factors, either sine or cosine, arising from non-commuting Pauli rotations.\\

As it would be sufficient to show difficulty to have on average at least one difficult observable per circuit, we plotted the maximum error of all single qubit Z-observables for each instance, averaged over multiple circuits of the same size. As we show in Appendix \ref{app:variance}, the average size of these observables decreases with the number of qubits, and therefore we consider the error relative to the size of the observables by dividing with a factor determined by the scale of the expectation value. This is a necessary step, since reporting only the additive error can lead to the false impression of improving performance as circuit size increases (think of an algorithm which always predicts a value of 0 - as the circuit size increases, this algorithm also improves its additive error). The choice of scale factor $e^{-L\tau^2/2}$ comes from the fact that $e^{L\tau^2/2}\mathbb{E}\left[\mathrm{Tr}(Z_j U(\gamma)\rho_0 U(\gamma)^\dagger)\right]$ is constant (see Appendix \ref{app:variance}), and therefore $e^{-L\tau^2/2}$ represents the ``expected" scaling of our observables.\\

For all truncation strategies, we observe for small numbers of qubits that the errors seem to scale proportionally to our observables, which suggests that in order to obtain inverse polynomially scaling errors beyond $e^{-L\tau^2/2} \sim n^{- \tau^2 \log e/2}$, we need to scale the truncation threshold in a way that makes the algorithm run in superpolynomial time. This contradicts the requirements we imposed for an efficient classical algorithm. In principle one can choose any fixed locality, and include X and Y terms, to create features, but we limit our numerics to single qubit Z-observables since lower-locality observables have higher expected magnitudes, and variables involving X and Y  are expected to have qualitatively similar behaviour.\\

We leave open the possibility that other truncation approaches may be more suitable here; 
we suspect that an algorithm which scales the truncation threshold as the propagation evolves, increasing the number of kept terms near the end, would be more efficient. We also note that in practice, quasipolynomial scaling may be sufficient. We leave it as a challenge to the community to determine if the scaling of a quantum computer can be matched.

\subsection{Tensor Network Simulation}\label{subsec:Tensor Networks}

Approximate approaches using tensor networks such as matrix product states and operators (MPS/MPO) and projected entangled pair states (PEPS) have had great success in a wide range of simulations of quantum systems. For an introduction into these methods, see \cite{Bridgeman_2017}. These methods produce efficient approximations when entanglement of the system is limited, particularly in the case of the area law states seen in many-body physics. Since our proof of an absence of a barren plateau relies on limited subsystem entanglement, it is important to consider whether these tensor network techniques are suitable for simulating our model.\\

From the perspective of tensor network simulation, the tensor network to contract corresponds to the expectation value  $\bra{\Phi(\gamma)}\tilde{O}\ket{\Phi(\gamma)}$, where $\ket{\Phi(\gamma)}=U(\gamma)\ket{0}^{\otimes n}$ and $\tilde{O}$ is an observable with non-trivial support on $O(k\log n)$ qubits (either $V^{\dagger}(\theta)O_iV(\theta)$, or one of the $O(k\log n)$ length Pauli strings to describe the relevant reduced state, depending on how the generative circuit's output is to be used). This provides us with a tensor network we need to contract into a single value whose
difficulty is determined by the largest intermediate tensor we must store during contraction. Only gates in $U(\gamma)$ which appear in the light cone of $\tilde{O}$ need be considered, as other gates cancel out during the contraction.\\

As a tree decomposition of the network graph can be used to obtain a contraction order, the best possible complexity of the contraction depends on the treewidth of the graph \cite{Markov_Shi_2008}. We denote by $G_l$ the graph on $n$ vertices corresponding to the qubits, with edges between those vertices where there is a CZ gate between qubits in $l$th layer of the generative circuit. As single qubit gates do not affect the treewidth of the tensor network, the choice of CZ gates according to graphs $G_l$ in each layer completely determine the final treewidth. Moreover, all the gates outside the relevant light cone can be made to cancel out during the contraction by using the fact that CZ gates commute with each other, and therefore we need to reach a high treewidth inside that light cone to ensure difficulty. Our strategy is to choose our graphs $G_l$ such that the following statements are true: 
\begin{itemize}
    \item After going back through layers $l=2\ldots L$, all $n$ qubits are in the light cone with high probability.
    \item The graph $G_1$ has a treewidth of $\Theta(n)$ with high probability. 
\end{itemize}
This will ensure that with high probability $G_1$ is a subgraph of our tensor network, and thus provides a lower bound for the treewidth. Our approach is inspired by \cite{bremner2017achieving}, who consider placement of diagonal 2-qubit gates for a different purpose, relating to the anti-concentration of Instantaneous Quantum Polynomial (IQP) circuits. At each layer we take $G_l$ to be a $G(n,p)$ Erdős–Rényi graph with $p=\log(n)/n$. One can see the expected number of 2-qubit gates acting on a single qubit is $\sim \log n$, with the expected total number of 2-qubit gates in each layer $\sim n \log n$. \\

We now sketch the relevant properties of our circuit that these graphs provide. The first concerns the spread of the light cone. If the expected number of connections for a single node is $c$, then in $L=\log_c n$ layers we expect to have discovered $\Theta(n)$ nodes (at each step we expect to discover $c$ new nodes from each of our old nodes, with duplicates only becoming an issue as we approach macroscopic scales). Thus with $L=\log(n)/\log(\log(n))$ we expect to have seen $\Theta(n)$ of the qubits - we will however take $L=\log(n)$, to increase this further. Figure \ref{fig:Light_Cone} shows the number of qubits in the light cone after $\log(n)-1$ layers, averaged over many graph instances. \\

\begin{figure}
\includegraphics[width=\textwidth]{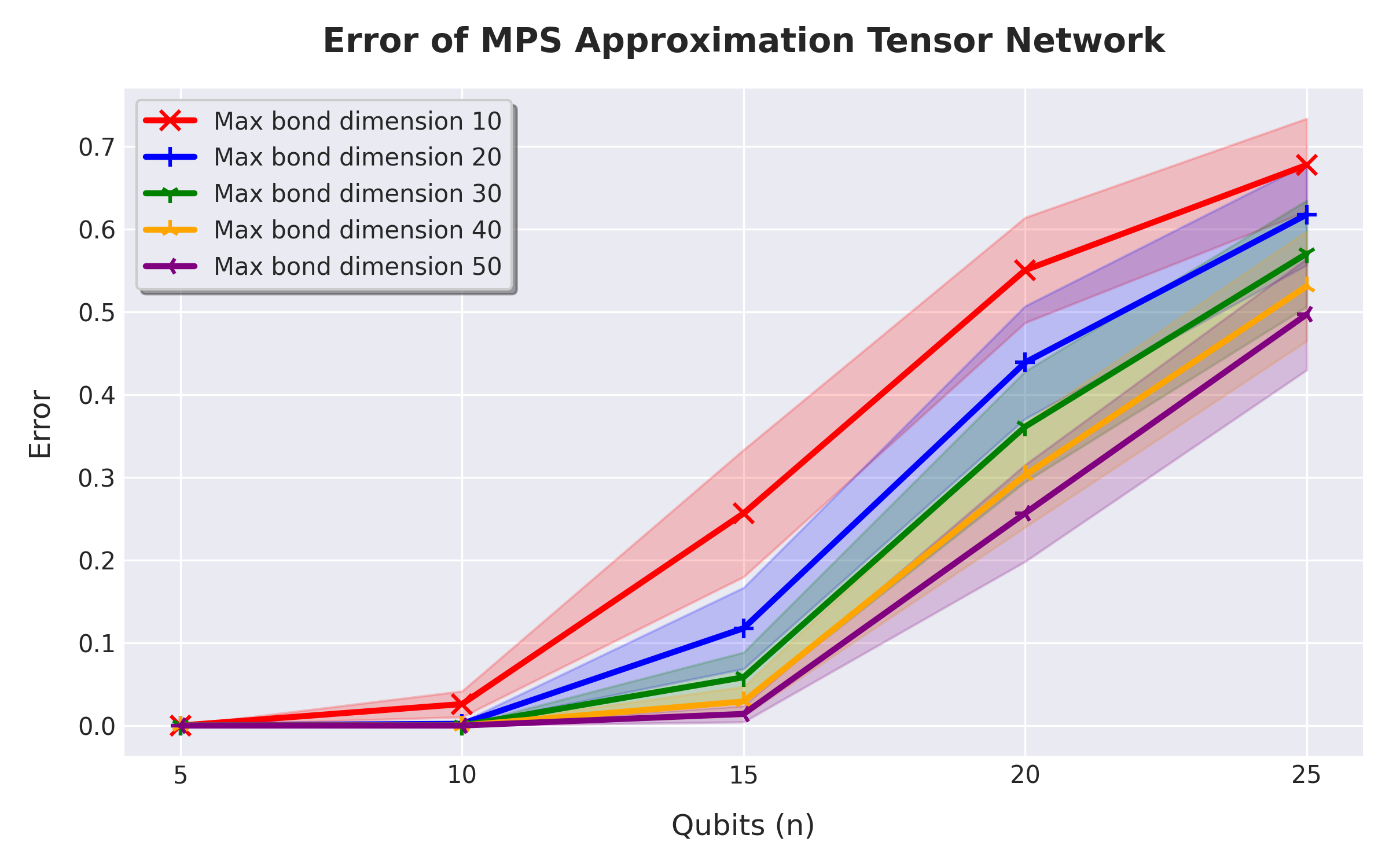}
\caption{Largest single-qubit Z-observable error of the MPS approximation with bounded bond dimension. Each point is averaged over 100 circuit instances, defined according to the generative model in Section \ref{sec:model_structure}. The shaded regions correspond to one standard deviation of the error. Angle variance is chosen to be $\tau^2 = 1/9$. From this plot, we can see that low bond dimension MPS approximations are not effective for our model.}
\label{fig:tensor_network_error}
\end{figure}

Now that our light cone is of the order $\Theta(n)$, we can consider the treewidth $\TW$ of our random graph in the first layer, $G_1$. Here we use a result from \cite{RandomTreeWidth}, which states for $G(n,p)$ graphs that $\TW(G) = n-o(n)$  if $p$ grows faster than $c/n,\, c>1$. Thus we can be sure that the treewidth of our tensor network contraction is at least as large as $\Theta(n)$, ensuring exponential scaling for an exact calculation. The average treewidth of $G_1$, considering only qubits within the light cone, is plotted in Figure \ref{fig:Tree_Width}.\\

In practice, one would try to simulate the circuit with an approximate tensor network method, such as an MPS with fixed bond dimension. We still expect such methods to perform poorly on our circuits as these tensor models exhibit locality properties not respected by our choice of entangling gates. This is supported by numerical simulations that we ran using the quimb library \cite{gray2018quimb}, as illustrated in Figure \ref{fig:tensor_network_error}.\\

We note that, to obtain both these properties (a large light cone and high treewidth), it would have been sufficient to take sparser graphs with  $p=c/n,\, c>1$ instead. However, layers of our chosen form can be transpiled onto a square grid architecture with depth $\sqrt{n}\log(n)$ with high probability; resulting in a total generative circuit depth of $O(\sqrt{n}\log^2(n))$. We feel that the extra factor of $\log(n)$ in the depth is worth the extra difficulty that results in the tensor network contraction, but if one was aiming to use these circuits for a supremacy experiment, it is likely that trading off some of the classical difficulty for improved fidelity on the quantum hardware would be beneficial. Furthermore, the connectivity of $CZ$ gates does not affect the Pauli propagation or the barren plateau arguments, meaning there is a lot of freedom of choice, so long as one ensures difficulty for tensor network methods. \\

\section{Conclusions and further directions}\label{sec:conclusions}

In this work, we have proposed a quantum circuit family to use as a generative seed; by transforming classical randomness in a way that (to the best of our knowledge) is not classically simulable, we hope to provide an initial quantum bias to large-scale classical generative models. Our goal is to encourage exploration of quantum-hybrid models beyond the standard toy datasets, but in a way in which we can be confident the quantum part is truly contributing something beyond classical capabilities. It may seem counter-intuitive to the reader that a quantum layer that is problem agnostic (since it does not contain trainable parameters) could be nevertheless useful; we point out that this is the basis of models such as quantum extreme learning machines \cite{innocenti2023potential,xiong2025fundamental,delorenzis2025behind,ELMResOpp} and reservoir computing \cite{ELMResOpp,PhysRevApplied.8.024030,nakajima2020quantum_reservoir_review,gyurik2025quantum_reservoir_featuremaps}, so this idea is not completely novel.\\

We acknowledge that there remain a number of questions to be answered, both theoretical and practical. On the theoretical side, the question remains whether there exists an improved truncation rule for Pauli propagation beyond the ones considered in this work; we believe an algorithm which scales the truncation threshold as the propagation evolves, increasing the number of kept terms near the end, would be a promising direction for future research. On the practical side, we have not touched on the issue of device noise; something also known to induce barren plateaus (e.g. \cite{Wang_2021}), and make Pauli propagation a viable simulation strategy \cite{PhysRevA.99.062337,shao2024simulating,fontana2025classical}. This shifts the goalposts firmly in the classical simulation's favour. Another interesting question we leave to a future work is to determine whether our generative circuit model can be surrogated by another type of classical model, rather than trying to simulate them directly. For example, one could try to train a neural network to reproduce the mapping $\gamma \rightarrow \langle \vec{O} \rangle_{\rho(\gamma)}$. This kind of adversarial approach is necessary to determine if our idea of minimal quantum contribution is viable. If our circuits pass this test, the next step is to look for datasets for which this bias is well suited.\\

On a more positive note, there is still a lot to explore regarding the structure of the circuits themselves. Anything diagonal in the Z basis can be substituted into the entangling layers without compromising the proofs for avoidance of mode collapse (and therefore also barren plateaus), opening the door to more hardware-friendly gates or higher-order interactions. This also opens the door to non-Clifford gates - something considered with respect to barren plateaus in \cite{changLatentStylebasedQuantum2024}, but whose influence on the Pauli propagation algorithm is less clear.\\

In summary, whilst currently the majority of interest in quantum generative models is focussed on discrete models, in order to exploit the strong results regarding the difficulty of sampling (\cite{bremner2010classical,aaronson2013computational,boixo2018characterizing,RevModPhys.95.035001}), we hope this work shows the question regarding the usefulness of deterministic models is still open. We should not forget that the majority of neural networks deal with real values and continuous mappings \cite{shalev2014understanding,goodfellow2016deep}, and the approach taken here is a lot less resource intensive than that for a binary discretisation. Furthermore, generative features can be generated once and re-used (akin to random numbers for non-cryptographic purposes), and if the trainable layer is done classically after taking classical shadows, then one can exploit all the advantages in training of parallelisation, back-propagation and noiseless gradient evaluations.

\section{Acknowledgements}
We thank Supanut Thanasilp as well as Mario Ponce, Jami Rönkkö, Fedor Šimkovic and Martin Leib for valuable discussions and contributions, and the anonymous referees of QCTiP and TQC for useful feedback on an earlier draft.

\bibliography{references}

\appendix

\section{Tensor Networks: Numerical Studies}
In Section \ref{subsec:Tensor Networks}, we presented results from the literature regarding treewidth of random graphs. To recap, (exact) tensor network simulation methods scale exponentially in the treewidth of the underlying entanglement graph \cite{Markov_Shi_2008}. We also considered numerical tensor approximation with a fixed bond dimension, and observed that the error grows as the circuit size increases. In this appendix we supplement those results with a numerical study of the typical treewidth observed in our circuits.\\

As stated in our main text, we ensure a high treewidth via the following two steps:
\begin{itemize}
\item The entangling layers $l=2\ldots \log n$ ensure a large proportion of the qubits are within the observable's light cone.
\item The final entanglement graph of layer $l=1$ has a high treewidth.
\end{itemize}

In Figure \ref{fig:Light_Cone}, we plot the average fraction of qubits included in the light cone prior to the final (in the Heisenberg picture) generative entanglement layer. For each circuit size, we average over 100 circuit instances. To calculate the size of the final light cone, we begin with a single node (representing a single qubit observable). For each layer, we then generate an Erdős–Rényi graph with $p=\log(n)/n$ (as described in the main text). We then update the light cone to include all nodes connected to our original choice, since in our circuit they are now connected by an entangling $CZ$ gate. We repeat this process $\lceil \log n\rceil$-1 (in this work, all logarithms are base 2) times, and note the final proportion of nodes included in the light cone.\\

Observing the results, we see that already for small systems sizes (where we would still be able to simulate the circuit with statevector approaches), not only do we expect all qubits to be included in the light cone, we expect this to be true with high probability. We can therefore be confident that we are unlikely to underestimate the difficulty of tensor network simulation due to light cone cancellations.\\

\begin{figure}
    \centering
    \includegraphics[width=\linewidth]{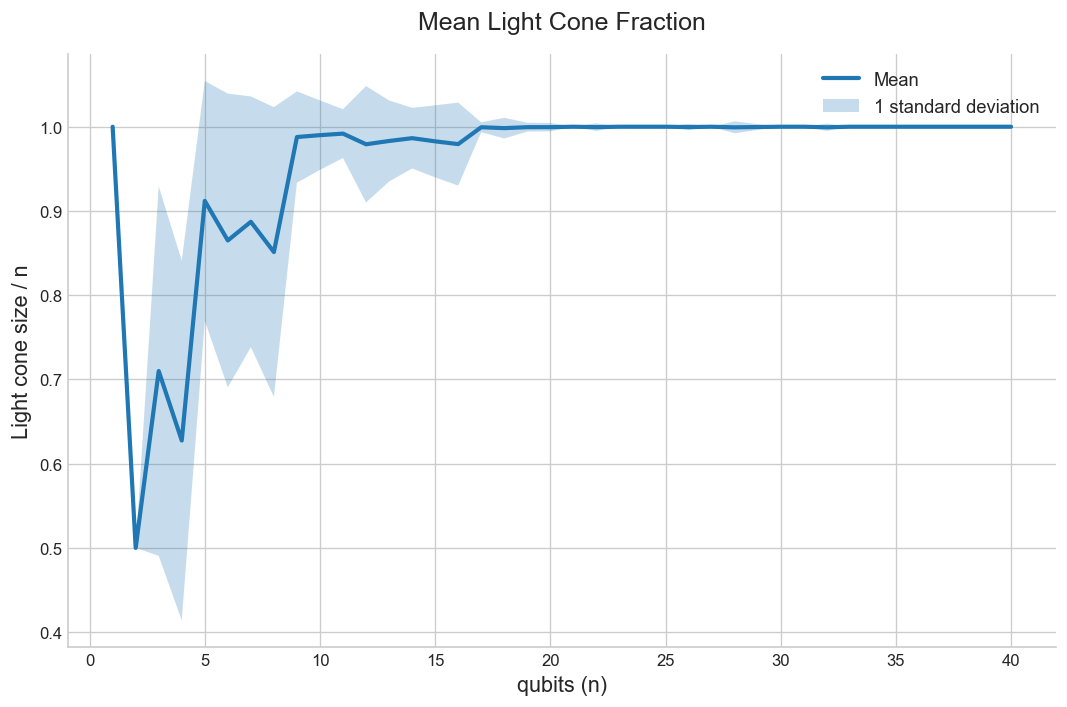}
    \caption{The fraction of qubits included in the reverse light cone for a single qubit observable, after $\lceil \log n \rceil -1$ layers. For each circuit size, we average over 100 instances. We observe that even for small numbers of qubits, all qubits are contained within the light cone with overwhelming probability.}
    \label{fig:Light_Cone}
\end{figure}

\begin{figure}
    \centering
    \includegraphics[width=\linewidth]{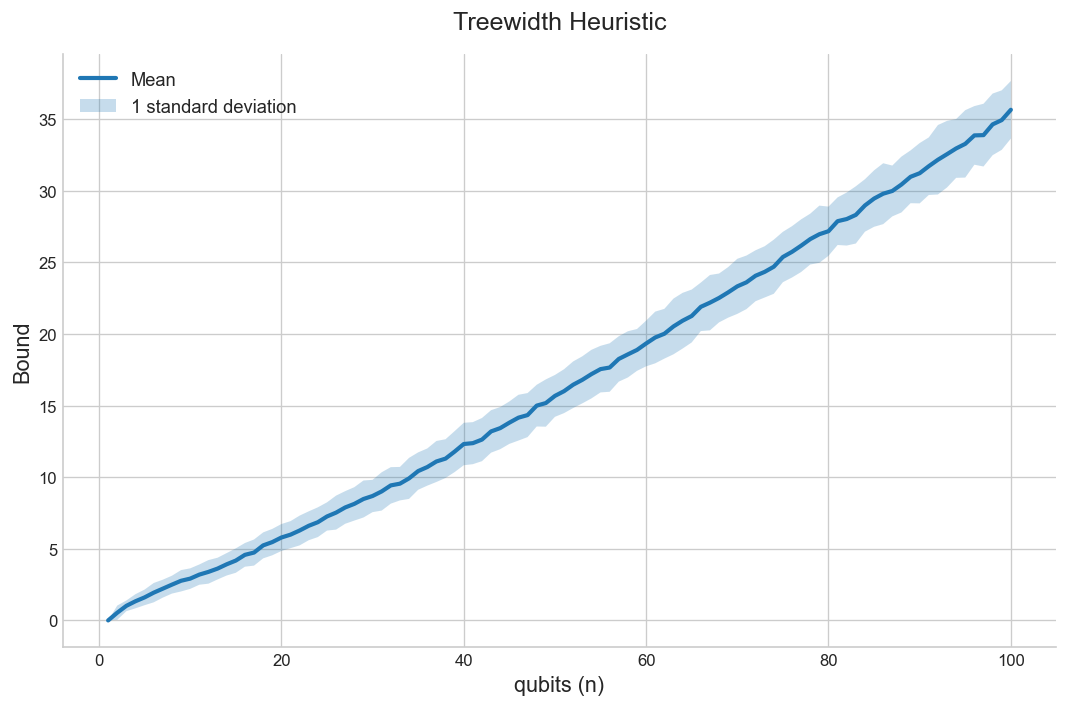} 
    \caption{A heuristic upper bound for the final entanglement graph of our generative circuits, in which connections not in the light cone have been removed. For each circuit size, results have been averaged over 100 instances. Whilst this heuristic is not a rigorous guarantee of difficulty, the runtime of tensor network evaluation algorithms depends on the heuristic tree decomposition found, rather than the true graph value, which is NP-complete to compute.}
    \label{fig:Tree_Width}
\end{figure}

From the light cones generated for Figure \ref{fig:Light_Cone}, we then generate a final entanglement graph for layer $l=1$. We then run the minimum fill-in heuristic function in python's \textsc{NetworkX} package \cite{SciPyProceedings_11}. Whilst this provides an \emph{upper bound} on the true treewidth, and is thus not a guarantee of difficulty, finding a minimal tree decomposition is in general an NP-complete problem. Thus we believe the results of Figure \ref{fig:Tree_Width} to be representative of the difficulty of actual tensor network simulators, which require an explicit tree decomposition to evaluate. We see that the heuristic scales linearly with $n$ (implying an exponential growth for the exact simulation). Combined with the numerics in Figure \ref{fig:tensor_network_error}, we feel confident in concluding tensor network methods are not sufficient for accurate simulation of our generative circuits. 

\section{Lower Bound of the Variance}\label{app:variance}

In this appendix we prove the lower bound for the variance of the generative circuit from the main text. This is exactly the condition that allows the model to avoid mode collapse. Theorem \ref{th:main} is obtained as a special case of Corollary \ref{cor:variance}, where the setting is more general. The approach of the proofs in this section is based on the one used in \cite{Zhang_Gauss_2022} that has been previously tightened in \cite{changLatentStylebasedQuantum2024}. In our assumptions, the entangling gates are assumed to be diagonal in the $Z$-basis as opposed to being composed of CZ gates as in \cite{Zhang_Gauss_2022} or Clifford gates as in \cite{changLatentStylebasedQuantum2024}.\\

Let $U(\gamma) = R_{L+2}(\gamma_{L+2}) R_{L+1}(\gamma_{L+1}) \prod_{l=L}^1 U_l(\gamma_l)$ be the quantum circuit, where $U_l(\gamma_l) = U_l' R_l(\gamma_j) $ is a composition of entangling gates $U_l'$ and Pauli X rotation gates $R_l(\gamma_l) = \prod_{j=1}^n R_{X,j}(\gamma_{l,j})$ for $l=1,\ldots, L+1$. Let the final set of rotations be Pauli Y rotations $R_{L+2}(\gamma_{L+2}) = \prod_{j=1}^n R_{Y,j}(\gamma_{L+2,j})$. Let $U_l'$ be any unitaries that commute with all single-qubit Pauli Z operators.\\

Let all single qubit rotation parameters $\gamma_{l,j}$ be drawn independently from the normal distribution $\mathcal{N}(0,\tau^2)$ and let the initial state $\rho_0$ be the zero state $\ket{0^n}\bra{0^n}$. Assume that $U_l'$ are also drawn from a (possibly trivial) probability distribution independently from $\gamma_{l,j}$.\\

In the generative circuit we propose in the main text, we have logarithmic depth $L = O(\log n)$ and constant variance $\tau^2$, and the entangling unitaries $U_l'$ consist of CZ gates between qubit pairs. We want to show that the variance of all $k$-local observables is inverse polynomially large. First, we will determine the expectation.

\begin{theorem} \label{th:expectation}
Let $P$ be a Pauli $Z$-operator acting on $k$ qubits, so $P \in \mathcal{P}_Z := \{P_1 \otimes \cdots \otimes P_n : P_j \in \{I,Z\}\}$. Then
\[
\mathbb{E}_{\gamma\sim \Gamma} [\mathrm{Tr}(PU(\gamma)\rho_0U(\gamma)^\dagger)] =
\left(e^{-\tau^2/2}\right)^{k(L+2)}.
\]
For any Pauli operator $P' \notin \mathcal{P}_Z$, we have instead
\[
\mathbb{E}_{\gamma \sim \Gamma} [\mathrm{Tr}(P'U(\gamma)\rho_0U(\gamma)^\dagger)] =
0.\\
\]
\end{theorem}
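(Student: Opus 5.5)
The plan is to work in the Heisenberg picture and average over the rotation angles one layer at a time. This is legitimate because every angle $\gamma_{l,j}$ occurs in exactly one gate, the angles are mutually independent, and they are independent of the entangling unitaries $U_l'$. Concretely, I would first condition on a fixed realisation of $U_1',\ldots,U_L'$. Since $\mathrm{Tr}(P U(\gamma)\rho_0 U(\gamma)^\dagger)$ is multilinear in the individual conjugation maps and the angles are independent, the expectation equals
\[
\mathrm{Tr}\bigl(\rho_0\,\Phi_1^*\circ\mathcal{U}_1'^*\circ\cdots\circ\Phi_L^*\circ\mathcal{U}_L'^*\circ\Phi_{L+1}^*\circ\Psi^*(P)\bigr).
\]
Here $\Phi_l^*$ is the $\gamma$-averaged adjoint channel of the layer of $R_X$ rotations, $\Psi^*$ is the averaged adjoint channel of the final $R_Y$ layer, and $\mathcal{U}_l'^*(A)=U_l'^\dagger A U_l'$. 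At the end, averaging over the $U_l'$ is trivial, because the answer will not depend on them.

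The key single-qubit computation, with $R_X(\gamma)=e^{-i\gamma X/2}$, is
\[
R_X(\gamma)^\dagger Z R_X(\gamma)=\cos\gamma\, Z \pm \sin\gamma\, Y,
\]
and similarly for $Y$, while $I$ and $X$ are fixed. For $\gamma\sim\mathcal{N}(0,\tau^2)$ we have $\mathbb{E}[\cos\gamma]=e^{-\tau^2/2}$ (the characteristic function) and $\mathbb{E}[\sin\gamma]=0$ (oddness). Hence the averaged single-qubit $R_X$ channel is Pauli-diagonal: it fixes $I$ and $X$ and multiplies $Y$ and $Z$ by $e^{-\tau^2/2}$. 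Analogously, the averaged $R_Y$ channel fixes $I$ and $Y$ and multiplies $X$ and $Z$ by $e^{-\tau^2/2}$. Taking tensor products, each averaged rotation layer maps any Pauli string to a scalar multiple of itself. For a $Z$-string of weight $k$ this scalar is exactly $(e^{-\tau^2/2})^k$.

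For $P\in\mathcal{P}_Z$ of weight $k$, I then follow $P$ backwards through the circuit:
\begin{itemize}
\item The $R_Y$ layer contributes a factor $e^{-k\tau^2/2}$.
\item Each of the $L+1$ layers of $R_X$ rotations contributes another factor $e^{-k\tau^2/2}$.
\item Each $U_l'$ commutes with single-qubit $Z$ operators, so $U_l'^\dagger P U_l'=P$ exactly, and $P$ never changes form.
\end{itemize}
After all layers the operator is $(e^{-\tau^2/2})^{k(L+2)}P$, and $\mathrm{Tr}(P\rho_0)=\langle 0^n|P|0^n\rangle=1$, which gives the first claim.

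For $P'\notin\mathcal{P}_Z$, the point is to find a subspace that every step preserves and that $\rho_0$ annihilates. I would use the space $\mathcal{O}$ of operators with vanishing diagonal in the computational basis. This is exactly the span of Pauli strings containing at least one $X$ or $Y$, so $P'\in\mathcal{O}$.
\begin{itemize}
\item The averaged rotation layers map each such Pauli string to a multiple of itself, so they preserve $\mathcal{O}$.
\item Each $U_l'$ is diagonal in the $Z$ basis, since it commutes with all $Z_j$ and these generate the maximal abelian algebra of diagonal matrices. For diagonal $D$, $(D^\dagger A D)_{xx}=|D_{xx}|^2A_{xx}=0$ whenever $A_{xx}=0$, so conjugation by $U_l'$ also preserves $\mathcal{O}$.
\end{itemize}
The propagated operator therefore has zero diagonal, and its trace against the diagonal state $\rho_0$ vanishes. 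The same holds for every realisation of the $U_l'$, so the expectation is $0$.

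The step needing most care is the factorisation of the expectation into composed averaged channels, in particular handling the random $U_l'$ by conditioning and then using independence. A second point is the sign convention, so that $\mathbb{E}[\sin\gamma]=0$ kills the $Y$ cross terms regardless of sign. Once these are in place, the remainder is bookkeeping.
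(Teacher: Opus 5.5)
Your proof is correct and follows essentially the same route as the paper. For the first claim you average layer by layer using $\mathbb{E}[\cos\gamma]=e^{-\tau^2/2}$, $\mathbb{E}[\sin\gamma]=0$ and $U_l'^\dagger P U_l'=P$; for the second you use the invariant subspace of zero-diagonal operators, which is exactly the paper's condition $\mathrm{Tr}(OP)=0$ for all $P\in\mathcal{P}_Z$ (the paper expands $\rho_0=2^{-n}\sum_{P\in\mathcal{P}_Z}P$ where you use that $\rho_0$ is diagonal), and your Pauli-diagonal averaged-channel formulation is a tidy repackaging of the paper's preservation lemma.
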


In the proof, we will use the expectations of trigonometric functions of normal variables
\begin{equation} \label{eq:trig_expect}
\mathbb{E}_{\gamma \sim \mathcal{N}(0,\tau^2)} [\cos(\gamma)] = e^{-\tau^2/2}, \qquad \mathbb{E}_{\gamma \sim \mathcal{N}(0,\tau^2)} [\sin(\gamma)] = 0,
\end{equation}
as well as the following lemma.

\begin{lemma}\label{le:preserve}
Assume that $\mathrm{Tr}(OP) = 0$ for all $P \in \mathcal{P}_Z$ and that V is a unitary that commutes with all Pauli Z-operators. Then $\mathrm{Tr}(V^{\dagger} O V P) = 0$ for all $P \in \mathcal{P}_Z$. Also for any qubit $j$, we have
\[
\mathbb{E}_{\gamma \sim \mathcal{N}(0,\tau^2)}[\mathrm{Tr}(R_{X,j}(\gamma)^\dagger O R_{X,j}(\gamma) P)] = 0.
\]
\end{lemma}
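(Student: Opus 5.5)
The proof has two parts, one for each claim of Lemma \ref{le:preserve}, and both follow from the cyclicity of the trace together with how $Z$-type Paulis behave under the relevant conjugations.

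\textbf{First claim.} Fix $P \in \mathcal{P}_Z$ and use cyclicity:
\[
\mathrm{Tr}(V^\dagger O V P) = \mathrm{Tr}(O\, V P V^\dagger).
\]
Since $V$ commutes with every Pauli $Z$-operator, and $P$ is a product of single-qubit $Z$'s, $V$ commutes with $P$. Hence $VPV^\dagger = P$, and the trace equals $\mathrm{Tr}(OP) = 0$ by hypothesis. No further work is needed here.

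\textbf{Second claim.} Again use cyclicity:
\[
\mathrm{Tr}\bigl(R_{X,j}(\gamma)^\dagger O R_{X,j}(\gamma) P\bigr) = \mathrm{Tr}\bigl(O\, R_{X,j}(\gamma) P R_{X,j}(\gamma)^\dagger\bigr),
\]
so it suffices to compute how the $X$-rotation on qubit $j$ acts on $P$. There are two cases.

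\emph{Case $P_j = I$.} Then $P$ commutes with $R_{X,j}(\gamma)$, the conjugation is trivial, and the trace is $\mathrm{Tr}(OP)=0$ for every $\gamma$.

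\emph{Case $P_j = Z$.} Write $R_{X,j}(\gamma) = \cos(\gamma/2) I - i\sin(\gamma/2) X_j$ (the argument does not depend on the angle convention). Using $X_j Z_j = -Z_j X_j$, the conjugation gives
\[
R_{X,j}(\gamma) P R_{X,j}(\gamma)^\dagger = \cos(\gamma)\, P \pm \sin(\gamma)\, P',
\]
where $P'$ is $P$ with its $Z_j$ factor replaced by $Y_j$; the sign depends on the convention and is irrelevant. Substituting, the trace becomes
\[
\cos(\gamma)\,\mathrm{Tr}(OP) \pm \sin(\gamma)\,\mathrm{Tr}(OP').
\]
The first term vanishes by hypothesis. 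The second term has $\mathrm{Tr}(OP')$ independent of $\gamma$, so taking expectations and applying Eq.~\eqref{eq:trig_expect}, $\mathbb{E}[\sin\gamma]=0$, kills it.

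\textbf{Main obstacle.} The only point that needs care is that $\mathrm{Tr}(OP')$ need not vanish, because $P' \notin \mathcal{P}_Z$. This is exactly why the second claim holds only in expectation and not pointwise, and the argument must rely on $\mathbb{E}[\sin\gamma]=0$ at that step rather than on the hypothesis on $O$.
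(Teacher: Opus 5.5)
Your proof is correct and follows the paper's argument. The first claim is cyclicity of the trace. For the second, you split on $P_j\in\{I,Z\}$: the $\cos(\gamma)$ term vanishes by hypothesis, and the $\sin(\gamma)$ term vanishes in expectation because $\mathbb{E}[\sin\gamma]=0$.

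Moving the rotation onto $P$ is a little cleaner than the paper's write-up. The paper expands $R_{X,j}(\gamma)^\dagger O R_{X,j}(\gamma)$ as $\cos(\gamma)O-\sin(\gamma)\,iX_jO$. As an operator identity that holds only for the part of $O$ anticommuting with $X_j$, though it becomes correct (up to sign convention) once traced against $P$ with $P_j=Z$.
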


\begin{proof}
Let $P \in \mathcal{P}_Z$ be arbitrary. For the first claim, we compute directly that for any $P$, we have $\mathrm{Tr}(V^{\dagger} O V P) = \mathrm{Tr}(O V P V^{\dagger}) = \mathrm{Tr}(OP) = 0$.\\

For the second claim, we have two cases. If $P_j = I$, then $P$ commutes with $R_{X,j}(\gamma)$, and we obtain immediately that 
\[
\mathrm{Tr}(R_{X,j}(\gamma)^\dagger O R_{X,j}(\gamma) P) = \mathrm{Tr}(O R_{X,j}(\gamma) P R_{X,j}(\gamma)^\dagger) = \mathrm{Tr}(OP) = 0.
\]
If instead $P_j = Z$, we have that $R_X(\gamma)_j^\dagger O R_X(\gamma)_j = \cos(\gamma)O - \sin(\gamma) iXO$. Therefore we have
\[
\mathbb{E}_{\gamma \sim \mathcal{N}(0,\tau^2)}[\mathrm{Tr}(R_X(\gamma)_j^\dagger O R_X(\gamma)_j P)] = \mathbb{E}_{\gamma \sim \mathcal{N}(0,\tau^2)}[\cos(\gamma)\mathrm{Tr}(O P)] + \mathbb{E}_{\gamma \sim \mathcal{N}(0,\tau^2)}[\sin(\gamma)\mathrm{Tr}(iXO P)] = 0.
\]

\end{proof}

\begin{proof}[Proof of Theorem \ref{th:expectation}]
We will show the result for fixed unitaries $U_l'$, because we can use the fact that $U_l'$ are independent from $\gamma_{l,j}$ to write
\[
\mathbb{E} [\mathrm{Tr}(PU(\gamma)\rho_0U(\gamma)^\dagger)] = \mathbb{E}_{U_l':l=1,\ldots,L} \mathbb{E}_{\gamma \sim \Gamma} \,\mathrm{Tr}(PU(\gamma)\rho_0U(\gamma)^\dagger)
\]
to deduce the general case.\\

Define $\rho_l$ to be $\rho_l = U_l(\gamma_l) \rho_{l-1} U_l(\gamma_l)^\dagger$ for $l=1,\ldots, L$ and $\rho_{l} = R_{l}(\gamma_{l}) \rho_{l-1} R_{l}(\gamma_{l})^\dagger$ for $l=L+1, L+2$. 
We have that
\[
\mathrm{Tr}(PU(\gamma)\rho_0U(\gamma)^\dagger) =  \mathrm{Tr}(P\rho_{L+2}) 
\]
Let $P \in \mathcal{P}_Z$ be a Pauli operator on $k$ qubits. Then both Pauli $X$ and $Y$ operators anti-commute with $P$ exactly on those $k$ qubits.
At any index $j$ not in the support of $P$, we have for any state $\rho$
\[
\mathbb{E}_{\gamma_{L+2,j}} [\mathrm{Tr}(P R_{Y,j}(\gamma_{L+2,j}) \rho R_{Y,j}(\gamma_{L+2,j})^\dagger)] = \mathbb{E}_{\gamma_{L+2,j}} [\mathrm{Tr} (P \rho)] = \mathrm{Tr} (P \rho)
\]
If $j$ is inside the support of $P$ then we have instead
\[
\mathbb{E}_{\gamma_{L+2,j}} [\mathrm{Tr}(P R_{Y,j}(\gamma_{L+2,j}) \rho R_{Y,j}(\gamma_{L+2,j})^\dagger)] = \mathbb{E}_{\gamma_{L+2,j}} [\left(\cos(\gamma_{L+2,j})  \mathrm{Tr}(P\rho) + \sin(\gamma_{L+2,j}) \mathrm{Tr}(iY_{j}P)  \right)] = \left(e^{-\tau^2/2}\right)\mathrm{Tr}(P\rho)
\]
by \ref{eq:trig_expect}. Therefore
\[
\mathbb{E}_{\gamma_{L+2}}[\mathrm{Tr}(P\rho_{L+2})] = \left(e^{-\tau^2/2}\right)^k\mathrm{Tr}(P\rho_{L+1}).
\]
An analogous argument for $X$-rotations yields that
\[
\mathbb{E}_{\gamma_{L+1}}[\mathrm{Tr}(P\rho_{L+1})] = \left(e^{-\tau^2/2}\right)^k\mathrm{Tr}(P\rho_{L}).
\]
For $l=1, \ldots, L$, we can use the fact that $U_l'$ commutes with $P$ to obtain
\begin{align*}
\mathbb{E}_{\gamma_{l}}[\mathrm{Tr}(P\rho_{l})] &= \mathbb{E}_{\gamma_{l}}[\mathrm{Tr}(PU_l'R_l(\gamma_l)\rho_{l-1}R_l(\gamma_l)^\dagger U_l'^\dagger)] = \mathbb{E}_{\gamma_{l}}[\mathrm{Tr}(U_l'^\dagger PU_l'R_l(\gamma_l)\rho_{l-1}R_l(\gamma_l)^\dagger)]\\
&= \mathbb{E}_{\gamma_{l}}[\mathrm{Tr}(PR_l(\gamma_l)\rho_{l-1}R_l(\gamma_l)^\dagger)] = \left(e^{-\tau^2/2}\right)^k\mathrm{Tr}(P\rho_{l-1}).
\end{align*}

We combine these estimates to obtain that 
\begin{align*}
\mathbb{E}_{\gamma\sim \Gamma} [\mathrm{Tr}(PU(\gamma)\rho_0U(\gamma)^\dagger)] &= \mathbb{E}_{\gamma_{1}} \cdots \mathbb{E}_{\gamma_{L}} \mathbb{E}_{\gamma_{L+1}} \mathbb{E}_{\gamma_{L+2}} \mathrm{Tr}(P\rho_{L+2}) = \mathbb{E}_{\gamma_{1}} \cdots \mathbb{E}_{\gamma_{L}} \mathbb{E}_{\gamma_{L+1}} \left(e^{-\tau^2/2}\right)^k\mathrm{Tr}(P\rho_{L+1})\\
&= \mathbb{E}_{\gamma_{1}} \cdots \mathbb{E}_{\gamma_{L}}  \left(e^{-\tau^2/2}\right)^{2k}\mathrm{Tr}(P\rho_{L}) = \cdots = \left(e^{-\tau^2/2}\right)^{k(L+2)}\mathrm{Tr}(P\rho_{0}) = \left(e^{-\tau^2/2}\right)^{k(L+2)}
\end{align*}
This proves the first claim.\\

For the second claim, note that we can write the zero state as a linear combination
\[
\rho_0 = 2^{-n} \sum_{P \in \mathcal{P}_Z} P.
\]
Let $P' \notin \mathcal{P}_Z$ be a Pauli operator. Then $\mathrm{Tr}(P'P) = 0$ for all $P \in \mathcal{P}_Z$. Using Lemma \ref{le:preserve} repeatedly for each gate of $U(\gamma)$, we can deduce that
\[
\mathbb{E}_{\gamma \sim \Gamma} [\mathrm{Tr}(U(\gamma)^\dagger P'U(\gamma)P)] = 0
\]
for all $P \in \mathcal{P}_Z$. It follows that 
\begin{align*}
\mathbb{E}_{\gamma \sim \Gamma} [\mathrm{Tr}(P'U(\gamma)\rho_0U(\gamma)^\dagger)] &= \mathbb{E}_{\gamma \sim \Gamma} [\mathrm{Tr}(U(\gamma)^\dagger P'U(\gamma)\rho_0)] = 2^{-n} \sum_{P \in \mathcal{P}_Z} \mathbb{E}_{\gamma \sim \Gamma} [\mathrm{Tr}(U(\gamma)^\dagger P'U(\gamma)P)] = 0.
\end{align*}
This proves the second claim.
\end{proof}

Next we need to estimate the expectation of the square, $\mathbb{E}_{\gamma \sim \Gamma} [\mathrm{Tr}(PU(\gamma)\rho_0U(\gamma)^\dagger)^2]$. This will allow us to lower bound the variance.

\begin{theorem} \label{th:sqr_expec}
Let $P$ be a Pauli operator that is the product of $k_1$ $X$-operators, $k_2$ $Y$-operators and $k_3$ $Z$-operators. Then we have
\[
\mathbb{E}_{\gamma \sim \Gamma} [\mathrm{Tr}(PU(\gamma)\rho_0U(\gamma)^\dagger)^2] \geq
\left[ \frac{1}{2}(1-\exp(-2\tau^2)) \right]^{k_1+k_2} \left[ \frac{1}{2}(1+\exp(-2\tau^2)) \right]^{k_2+2k_3+kL}.
\]
\end{theorem}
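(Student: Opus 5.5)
The plan is to work in the Heisenberg picture and peel off one layer of rotations at a time, starting from the outermost. At each step I will lower bound the squared expectation by keeping a single Pauli ``branch'' and discarding the other. As in the proof of Theorem \ref{th:expectation}, I first fix the entangling unitaries $U_l'$ and average over them only at the end. I also use the notation $\rho_l$ from that proof. The key fact is the second-moment analogue of \eqref{eq:trig_expect}:
\[
\mathbb{E}[\cos^2\gamma]=\tfrac12(1+e^{-2\tau^2}),\qquad \mathbb{E}[\sin^2\gamma]=\tfrac12(1-e^{-2\tau^2}),\qquad \mathbb{E}[\sin\gamma\cos\gamma]=0,
\]
where the last identity holds because $\sin\gamma\cos\gamma$ is odd and $\gamma$ is symmetric.

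First I would prove a one-gate lemma. Let $Q$ be a Pauli string and $R=R_{X,j}(\gamma)$ or $R_{Y,j}(\gamma)$. Let $\sigma$ be a state that does not depend on $\gamma$, which is the case for $\rho_{l-1}$ because the parameters are independent. If $Q$ commutes with the generator, then $\mathrm{Tr}(R^\dagger QR\sigma)^2=\mathrm{Tr}(Q\sigma)^2$. Otherwise $R^\dagger QR=\cos\gamma\,Q+\sin\gamma\,Q'$, where $Q'=\pm iGQ$ is again a Pauli string. Squaring and averaging kills the cross term, so
\[
\mathbb{E}_\gamma[\mathrm{Tr}(R^\dagger Q R\sigma)^2]=\mathbb{E}[\cos^2\gamma]\,\mathrm{Tr}(Q\sigma)^2+\mathbb{E}[\sin^2\gamma]\,\mathrm{Tr}(Q'\sigma)^2 .
\]
Both terms are nonnegative, so I can keep whichever one is convenient. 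The rotations within a layer act on distinct qubits, so I apply this lemma qubit by qubit and use the tower property of conditional expectation.

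Next comes the branch choice, so that after the final two rotation layers every site of the support carries $Z$. For each $X$ site, under $R_Y$ I keep the $\sin$ branch ($X\to Z$, factor $\tfrac12(1-e^{-2\tau^2})$), and under $R_X$ I keep the $\cos$ branch. For each $Y$ site, $R_Y$ acts trivially, and under $R_X$ I keep the $\sin$ branch ($Y\to Z$). For each $Z$ site I keep the $\cos$ branch under both rotations. This produces a $Z$-string $P_Z$ on the same $k=k_1+k_2+k_3$ sites. Its coefficient is $[\tfrac12(1-e^{-2\tau^2})]^{k_1+k_2}$ times a power of $\tfrac12(1+e^{-2\tau^2})$ counting the $\cos$ branches. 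Which of $k_1$ or $k_2$ picks up the extra $\cos$ factor depends on the gate-ordering and sign conventions, and I would check this bookkeeping carefully against the stated exponent $k_2+2k_3$.

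For the $L$ entangling layers, $P_Z$ commutes with $U_l'$. Hence $\mathrm{Tr}(P_Z\rho_l)=\mathrm{Tr}(P_Z R_l\rho_{l-1}R_l^\dagger)$, and the one-gate lemma with the $\cos$ branch on each of the $k$ sites gives a factor $[\tfrac12(1+e^{-2\tau^2})]^k$ per layer, for $kL$ in total. At the end, $\mathrm{Tr}(P_Z\rho_0)^2=1$. Averaging over the $U_l'$ preserves the bound, since the bound does not depend on them.

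The main obstacle is conditioning correctly, so that each squared trace is evaluated against a state independent of the rotation currently being averaged. This is why I peel from the outside in, with $\rho_{l-1}$ fixed while averaging $\gamma_l$. The other difficulty is matching the exact exponent bookkeeping for the $X$ and $Y$ sites in the final two layers. If my count gives $k_1+2k_3$ instead of $k_2+2k_3$, I would recheck the order of $R_{L+1}$ and $R_{L+2}$ in the Heisenberg picture. I would then look for the alternative branch choice for which $X$ sites avoid the extra $\cos$ factor.
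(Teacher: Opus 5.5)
Your proposal is the paper's proof. You fix the $U_l'$, strip off $R_{L+2}$ (the $Y$ layer) and then $R_{L+1}$ (the $X$ layer) in the Heisenberg picture with exactly the branch choices you list, and then keep the $\cos$ branch on all $k$ sites through each of the $L$ layers, using that $P_Z$ commutes with $U_l'$. Your explicit remark that $\mathbb{E}[\sin\gamma\cos\gamma]=0$ removes the cross term is something the paper uses only tacitly.

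The one thing to fix is your hedging on the exponent, which does not depend on any convention. Since $U(\gamma)=R_{L+2}R_{L+1}\prod_l U_l(\gamma_l)$, the $Y$ layer acts on $P$ first. Your branch choice then gives one $\cos$ factor per $X$ site, none per $Y$ site and two per $Z$ site, so the exponent is $k_1+2k_3+kL$. Signs are irrelevant because everything is squared. This is exactly what the final line of the paper's own proof obtains. The $k_2$ in the displayed statement is a typo; it would be correct only if the last two rotation layers were applied in the opposite order.

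So drop the plan of searching for a branch choice that spares the $X$ sites their $\cos$ factor. None exists, because the inequality with $k_2+2k_3$ is false whenever $k_1>k_2$. For example, take $n=1$, $L=1$, $U_1'=I$, $P=X$, and write $\gamma_1,\gamma_2,\gamma_3$ for the three angles. Then $\mathrm{Tr}(X\rho)=\pm\sin\gamma_3\cos(\gamma_1+\gamma_2)$, so $\mathbb{E}[\mathrm{Tr}(X\rho)^2]=\tfrac14(1-e^{-2\tau^2})(1+e^{-4\tau^2})$. For every $\tau>0$ this is strictly below the stated bound $\tfrac14(1-e^{-2\tau^2})(1+e^{-2\tau^2})$, but at least the corrected bound $\tfrac18(1-e^{-2\tau^2})(1+e^{-2\tau^2})^2$.

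The typo does not affect anything downstream. Theorem \ref{th:z_variance} uses only the case $k_1=k_2=0$, and Corollary \ref{cor:variance} needs only a $1/\mathrm{poly}(n)^k$ bound, which the corrected exponent still provides.
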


In the proof, we will use the following identities for expectations:
\begin{equation} \label{eq:trig_sqr_expect}
\mathbb{E}_{\gamma \sim \mathcal{N}(0,\tau^2)} [\cos^2(\gamma)] = \frac{1}{2}(1+e^{-2\tau^2}), \qquad \mathbb{E}_{\gamma \sim \mathcal{N}(0,\tau^2)} [\sin^2(\gamma)] = \frac{1}{2}(1-e^{-2\tau^2})    
\end{equation}

\begin{proof}
We may again assume that the gates $U_l'$ are fixed.
Let $P$ be a Pauli operator. As in the previous proof, define $\rho_l$ to be $\rho_l = U_l(\gamma_l) \rho_{l-1} U_l(\gamma_l)^\dagger$ for $l=1,\ldots, L$ and $\rho_{l} = R_{l}(\gamma_{l}) \rho_{l-1} R_{l}(\gamma_{l})^\dagger$ for $l=L+1, L+2$.\\

Consider the rotations $R_{Y,j}(\gamma_{L+2,j})$. We see that if $P_j \in \{I,Y\}$, then $P_j$ commutes with $R_{Y,j}(\gamma_{L+2,j})$ and therefore
\[
\mathrm{Tr}(PR_{Y,j}(\gamma_{L+2,j}) \rho R_{Y,j}(\gamma_{L+2,j})^\dagger) = \mathrm{Tr}(P \rho ).
\]
If instead $P_j \in \{X,Z\}$, then $P_j$ anti-commutes with  $Y_j$ and we obtain that
\[
\mathrm{Tr}(PR_{Y,j}(\gamma_{L+2,j}) \rho R_{Y,j}(\gamma_{L+2,j})^\dagger) = \cos(\gamma_{L+2,j}) \mathrm{Tr}(P \rho ) + \sin(\gamma_{L+2,j}) \mathrm{Tr}(i Y_j P \rho ).
\]
We can use \ref{eq:trig_sqr_expect} to estimate the expectation of the square by 
\[
\mathbb{E}_{\gamma_{L+2,j} \sim \mathcal{N}(0,\tau^2)} [\mathrm{Tr}(PR_{Y,j}(\gamma_{L+2,j}) \rho R_{Y,j}(\gamma_{L+2,j})^\dagger)^2] \geq \mathbb{E}_{\gamma_{L+2,j} \sim \mathcal{N}(0,\tau^2)} [\cos^2(\gamma_{L+2,j})\mathrm{Tr}(P \rho )^2] = \frac{1}{2}(1+e^{-2\tau^2})\mathrm{Tr}(P \rho )^2
\]
when $P_j = Z$ and 
\begin{align*}
\mathbb{E}_{\gamma_{L+2,j} \sim \mathcal{N}(0,\tau^2)} [\mathrm{Tr}(PR_{Y,j}(\gamma_{L+2,j}) \rho R_{Y,j}(\gamma_{L+2,j})^\dagger)^2] &\geq \mathbb{E}_{\gamma_{L+2,j} \sim \mathcal{N}(0,\tau^2)} [\sin^2(\gamma_{L+2,j})\mathrm{Tr}(i Y_j P \rho )^2]\\ &= \frac{1}{2}(1-e^{-2\tau^2})\mathrm{Tr}(i Y_j P \rho )^2
\end{align*}
when $P_j = X$. This yields that
\[
\mathbb{E}_{\gamma_{L+2}} [\mathrm{Tr}(P \rho_{L+2})^2] \geq \left[ \frac{1}{2}(1-\exp(-2\tau^2)) \right]^{k_1} \left[ \frac{1}{2}(1+\exp(-2\tau^2)) \right]^{k_3}  \mathrm{Tr}( \prod_{j: P_j = X}(iY_j) P \rho_{L+1})^2.
\]
Note that the operator $\prod_{j: P_j = X}(iY_j) P$ inside the trace is a Pauli operator obtained from $P$ by replacing $X$ with $Z$ at all $k_1$ indices. This Pauli operator consists of $k_1+k_3$ $Z$-operators and $k_2$ $Y$-operators. Doing similar arguments for $R_{X,j}$ gates, we can estimate that
\[
\mathbb{E}_{\gamma_{L+1}} [\mathrm{Tr}(P \rho_{L+1})^2] \geq \left[ \frac{1}{2}(1-\exp(-2\tau^2)) \right]^{k_2} \left[ \frac{1}{2}(1+\exp(-2\tau^2)) \right]^{k_1+k_3}  \mathrm{Tr}( P' \rho_{L})^2.
\]
Here $P' \in \mathcal{P}_Z$ is the Pauli operator obtained from $P$ by replacing every $X$ and $Y$ operator from $P$ with a $Z$ operator, acting on $k$ qubits in total.\\

As $U_l'$ commutes with $P'$, we can estimate our quantity for $1 \leq l \leq L$ repeatedly using the $P_j=Z$ bound approach as above:
\begin{align*}
\mathbb{E}_{\gamma_{l}} [\mathrm{Tr}( P' \rho_{l})^2] &= \mathbb{E}_{\gamma_{l}} [\mathrm{Tr}( P' U_l' R_{l}(\gamma_l) \rho_{l-1} R_l(\gamma_l)^\dagger U_l'^\dagger )^2] = \mathbb{E}_{\gamma_{l}} [\mathrm{Tr}( P' R_{l}(\gamma_l) \rho_{l-1} R_l(\gamma_l)^\dagger )^2]\\
&\geq \left[ \frac{1}{2}(1+\exp(-2\tau^2)) \right]^k \mathrm{Tr}( P'  \rho_{l-1} )^2.
\end{align*}
Combining these estimates yields that 
\begin{align*}
\mathbb{E}_{\gamma_{1}} \mathbb{E}_{\gamma_{2}} \cdots  \mathbb{E}_{\gamma_{L+2}} \mathrm{Tr}(P \rho_{L+2})^2 &\geq \mathbb{E}_{\gamma_{1}} \mathbb{E}_{\gamma_{2}} \cdots  \mathbb{E}_{\gamma_{L+1}} \left[ \frac{1}{2}(1-\exp(-2\tau^2)) \right]^{k_1} \left[ \frac{1}{2}(1+\exp(-2\tau^2)) \right]^{k_3}  \mathrm{Tr}( \prod_{j: P_j = X}(iY_j) P \rho_{L+1})^2\\
& \geq \mathbb{E}_{\gamma_{1}} \mathbb{E}_{\gamma_{2}} \cdots  \mathbb{E}_{\gamma_{L}}\left[ \frac{1}{2}(1-\exp(-2\tau^2)) \right]^{k_1 + k_2} \left[ \frac{1}{2}(1+\exp(-2\tau^2)) \right]^{k_1+2k_3}  \mathrm{Tr}( P' \rho_{L})^2 \\
& \geq \mathbb{E}_{\gamma_{1}} \mathbb{E}_{\gamma_{2}} \cdots \mathbb{E}_{\gamma_{L-1}}\left[ \frac{1}{2}(1-\exp(-2\tau^2)) \right]^{k_1 + k_2} \left[ \frac{1}{2}(1+\exp(-2\tau^2)) \right]^{k_1+2k_3 + k}  \mathrm{Tr}( P' \rho_{L-1})^2\\
& \geq \left[ \frac{1}{2}(1-\exp(-2\tau^2)) \right]^{k_1 + k_2} \left[ \frac{1}{2}(1+\exp(-2\tau^2)) \right]^{k_1+2k_3 + kL}  \mathrm{Tr}( P' \rho_{0})^2\\
 &= \left[ \frac{1}{2}(1-\exp(-2\tau^2)) \right]^{k_1 + k_2} \left[ \frac{1}{2}(1+\exp(-2\tau^2)) \right]^{k_1+2k_3 + kL}.
\end{align*}
This proves the claim.
\end{proof}

As variance can be expressed $\mathrm{Var}[X]=\mathbb{E}[X^2]-\mathbb{E}[X]^2$, since we proved in Theorem \ref{th:expectation} that the expectation of all Pauli observables with at least one $X$- or $Y$-operator vanishes, for these observables our lower bound on $\mathbb{E}[X^2]$ becomes directly a bound on the variance, proving that their variance is inverse polynomially large when $\tau^2 = O(\log n / L)$. We now show the same for the products of $Z$-operators.

\begin{theorem} \label{th:z_variance}
For a Pauli operator $P \in \mathcal{P}_Z$ with $k$ non-identity terms, we have
\[
\mathrm{Var}_{\gamma \sim \Gamma} [\mathrm{Tr}(P U(\gamma) \rho_0 U_\gamma^\dagger] \geq \frac{k(L+2)}{2} (1- \exp(-\tau^2))^2\exp(-\tau^2)^{k(L+2)-1}.
\]
\end{theorem}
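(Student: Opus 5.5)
The plan is to combine the two moment results already established, Theorem~\ref{th:expectation} and Theorem~\ref{th:sqr_expec}, through $\mathrm{Var}[X]=\mathbb{E}[X^2]-\mathbb{E}[X]^2$, and then to lower bound the resulting difference of two powers by an elementary mean value argument. Throughout, write $X=\mathrm{Tr}(PU(\gamma)\rho_0U(\gamma)^\dagger)$ and $N=k(L+2)$.

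First, the mean. Theorem~\ref{th:expectation} gives $\mathbb{E}[X]=(e^{-\tau^2/2})^{N}$. Therefore
\[
\mathbb{E}[X]^2=c^N,\qquad c:=e^{-\tau^2}.
\]

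Second, the second moment. I would apply Theorem~\ref{th:sqr_expec} with $k_1=k_2=0$ and $k_3=k$, which is the case for $P\in\mathcal{P}_Z$. The first bracket then disappears, and the exponent of the second bracket is $2k+kL=N$. This gives
\[
\mathbb{E}[X^2]\geq b^N,\qquad b:=\tfrac12\bigl(1+e^{-2\tau^2}\bigr).
\]
Hence $\mathrm{Var}[X]\geq b^N-c^N$.

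Third, the comparison of $b$ and $c$. A direct computation gives
\[
b-c=\tfrac12\bigl(1-2e^{-\tau^2}+e^{-2\tau^2}\bigr)=\tfrac12\bigl(1-e^{-\tau^2}\bigr)^2\geq 0 .
\]
So $b\ge c>0$. By the mean value theorem applied to $x\mapsto x^N$ on $[c,b]$, and using that the derivative $Nx^{N-1}$ is increasing for $x>0$, we get
\[
b^N-c^N\geq Nc^{N-1}(b-c)=\frac{k(L+2)}{2}\bigl(1-e^{-\tau^2}\bigr)^2 e^{-\tau^2(k(L+2)-1)},
\]
which is exactly the claimed bound.

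I do not expect a serious obstacle, since the heavy lifting is already done in Theorems~\ref{th:expectation} and~\ref{th:sqr_expec}. The two points that need care are the exponent bookkeeping when specialising Theorem~\ref{th:sqr_expec} (checking that $k_1+2k_3+kL$ with $k_1=0$, $k_3=k$ equals $k(L+2)$), and the direction of the mean value estimate. The latter requires $b\ge c$, which the perfect-square identity above supplies.
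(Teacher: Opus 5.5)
Your proposal is correct and follows the paper's proof essentially step for step. Both arguments specialise Theorem~\ref{th:sqr_expec} to $k_1=k_2=0$, $k_3=k$, subtract the squared mean from Theorem~\ref{th:expectation}, and use $b-c=\tfrac12(1-e^{-\tau^2})^2$. The only cosmetic difference is that the paper bounds $b^N-c^N$ through the factorisation $(b-c)\sum_i b^{N-1-i}c^i \ge N c^{N-1}(b-c)$ instead of the mean value theorem, and this yields the same estimate.
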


\begin{proof}
Combining Theorems \ref{th:expectation} and \ref{th:sqr_expec} yields that
\begin{align*}
\mathrm{Var}_{\gamma \sim \Gamma} [\mathrm{Tr}(P U(\gamma) \rho_0 U(\gamma)^\dagger] &= \mathbb{E}_{\gamma \sim \Gamma} \left[\mathrm{Tr}(P U(\gamma) \rho_0 U(\gamma)^\dagger)^2\right] - \left[\mathbb{E}_{\gamma \sim \Gamma} [\mathrm{Tr}(P U(\gamma) \rho_0 U(\gamma)^\dagger)]\right]^2\\
&\geq \left[ \frac{1}{2}(1+\exp(-2\tau^2)) \right]^{k(L+2)} - \exp({-\tau^2})^{k(L+2)}.
\end{align*}
For $a \geq b \geq 0$, we can estimate $a^N - b^N = (a-b)(a^{N-1} + a^{N-2}b + \cdots + ab^{N-2} + b^{N-1}) \geq (a-b)\cdot Nb^{N-1}$. Therefore
\[
\left[ \frac{1}{2}(1+\exp(-2\tau^2)) \right]^{k(L+2)} - \exp({-\tau^2})^{k(L+2)} \geq \frac{k(L+2)}{2}\left[ 1+\exp(-2\tau^2) - 2\exp({-\tau^2}) \right] \exp(-\tau^2)^{k(L+2)-1}.
\]
Noting that $1+\exp(-2\tau^2) - 2\exp({-\tau^2}) = (1-\exp(-\tau^2))^2$ finishes the proof.
\end{proof}

Combining Theorems \ref{th:sqr_expec} and \ref{th:z_variance} immediately yields the following corollary.

\begin{corollary} \label{cor:variance}
If $\tau^2 = O(\log n / L )$, then for any $k$-local Pauli observable we have $\mathrm{Var}_ {\gamma \sim \Gamma} [\mathrm{Tr}(P U(\gamma) \rho_0 U(\gamma)^\dagger)] \geq 1/\mathrm{poly}(n)^k$.
\end{corollary}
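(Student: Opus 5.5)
The plan is to split into two cases according to whether the Pauli operator $P$ lies in $\mathcal{P}_Z$, and in each case convert the explicit bounds already proved into a $1/\mathrm{poly}(n)^k$ bound using $\tau^2 L = O(\log n)$. Throughout, write $\tau^2 \le c\log n/L$ for some constant $c$, and take $L \ge 1$; we may also assume $\tau^2 \le$ some fixed constant, since $L$ only enters through the product $\tau^2 L$.

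\emph{Case $P\notin\mathcal{P}_Z$.} Here $k_1+k_2\ge 1$, and Theorem \ref{th:expectation} gives mean zero, so the variance equals the second moment. Theorem \ref{th:sqr_expec} then gives the lower bound
\[
\left[\tfrac12(1-e^{-2\tau^2})\right]^{k_1+k_2}\left[\tfrac12(1+e^{-2\tau^2})\right]^{k_2+2k_3+kL}.
\]
For the second factor, use $\tfrac12(1+e^{-2\tau^2})\ge e^{-\tau^2}$, which follows from convexity (AM--GM). This gives the bound $e^{-\tau^2(k_2+2k_3+kL)} \ge e^{-\tau^2 k(L+2)} \ge n^{-O(k)}$. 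For the first factor, $1-e^{-2\tau^2}\ge \tau^2$ for $\tau^2\le 1/2$, so the factor is at least $(\tau^2/2)^{k}$.

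\emph{Case $P\in\mathcal{P}_Z$.} Theorem \ref{th:z_variance} gives
\[
\tfrac{k(L+2)}{2}(1-e^{-\tau^2})^2 e^{-\tau^2(k(L+2)-1)}.
\]
The exponential term is $n^{-O(k)}$ exactly as in the first case. The prefactor is at least $\tfrac12(1-e^{-\tau^2})^2 \ge \tau^4/8$ for small $\tau^2$.

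The only genuine obstacle is the $\tau$-dependent prefactors $(\tau^2/2)^{k}$ and $\tau^4/8$. They are polynomial in $n$ only if $\tau^2$ is not super-polynomially small. The hypothesis $\tau^2=O(\log n/L)$ is only an upper bound, so I would read it as $\tau^2=\Theta(\log n/L)$, or at least $\tau^2\ge 1/\mathrm{poly}(n)$. In the main-text regime (constant $\tau^2$, $L=O(\log n)$) these prefactors are constants.

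With that reading, and using $L\le\mathrm{poly}(n)$, the prefactors are at worst $n^{-O(k)}$. Multiplying by the $n^{-O(k)}$ exponential term gives the bound $\mathrm{poly}(n)^{-k}$ in both cases, which is the claim. Theorem \ref{th:main} then follows by specializing to constant $\tau^2$ and $L=O(\log n)$.
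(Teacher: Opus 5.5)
Your proposal is correct and follows the paper's route exactly. The paper obtains the corollary by combining Theorem~\ref{th:expectation} (vanishing mean off $\mathcal{P}_Z$, so the variance equals the second moment) with Theorems~\ref{th:sqr_expec} and~\ref{th:z_variance}, and it leaves implicit the conversion to $n^{-O(k)}$ that you carry out via $\tfrac12(1+e^{-2\tau^2})\ge e^{-\tau^2}$ and $\tau^2(L+2)=O(\log n)$. Your remark that the hypothesis also needs a lower bound such as $\tau^2\ge 1/\mathrm{poly}(n)$ points to a genuine omission in the stated corollary (at $\tau=0$ the variance is zero), not a gap in your argument. Your reduction to bounded $\tau^2$ is unnecessary but harmless, since for $\tau^2\ge 1/2$ the $\tau$-dependent prefactors are bounded below by constants.
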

In particular, having $L = O(\log n)$ and constant $\tau^2$ implies Theorem \ref{th:main} from the main text. For single qubit $Z$ observables, the expectation is $(e^{-\tau^2/2})^{L+2}$ by Theorem \ref{th:expectation}, meaning that we can use scaling factor $e^{-L\tau^2/2}$ when generating Figure \ref{fig:pauli_prop_comb}.\\

Our proof relies on the assumption that the entangling unitaries are diagonal in the $Z$ basis. In \cite{changLatentStylebasedQuantum2024}, the setup is similar to ours but the assumption is instead that the entangling gates are Clifford gates. As the model we are proposing has CZ gates in the entangling layer, either of the two proofs would suffice. Our proof holds not only for CZ gates, but for arbitrary unitaries that are diagonal in $Z$ basis, even if they are not Clifford gates. However, if the entangling gates are Clifford gates that do not commute with $Z$-operators, our proof no longer works and instead the proof from \cite{changLatentStylebasedQuantum2024} is more appropriate.


\section{Subvolume Law versus Area Law}\label{app:barren}
In \cite{Leone_HEA_2024}, the authors explain how, for a variational circuit ansatz, the initial state fed into the ansatz may induce barren plateaus even if the model would not suffer from them otherwise. In the case of the shallow hardware efficient ansatz (HEA), the authors call the requirement the initial state has to satisfy to prevent a barren plateau an {\it area law}. In this appendix, we argue that this should instead be called a {\it (weak) subvolume law}, to avoid conflation with the area law from many-body literature.\\

The standard definition of the area law for the entanglement from quantum many-body physics requires that for a subsystem $\Lambda$ the entropy of entanglement, i.e. the von Neumann entropy of the reduced density matrix, scales with the size of the boundary, i.e.
\begin{equation}
S(\rho_{\Lambda}) := -\mathrm{Tr} [ \rho_\Lambda \log \rho_{\Lambda} ] \sim |\partial \Lambda |
\end{equation}
For a volume law, one instead has that the entropy scales with the size, $S(\rho_{\Lambda}) \sim |\Lambda|$. In the barren plateau literature, the relevant quantity to consider is the distance of the reduced density matrix from the maximally mixed state. As noted by the authors of \cite{Leone_HEA_2024}, this is non-standard and leads to different definitions: an area law requires $|\Lambda|-S(\rho_{\Lambda}) \geq 1/\mathrm{poly}(n)$ and a volume law $|\Lambda|-S(\rho_{\Lambda}) \leq 1/\exp(n)$.\\

To avoid confusion related to these terms, we shall adopt the following convention. We say that a state $\rho$ possesses \emph{subvolume law} entanglement within $\Lambda$ if
\begin{equation}
S(\rho_{\Lambda}) \in o(|\Lambda|),
\end{equation}
and it possesses \emph{weak subvolume law} if
\begin{equation}
S(\rho_{\Lambda}) \leq |\Lambda|-\frac{1}{P(n)}
\end{equation}
for a polynomial $P$. The definition for the weak subvolume law coincides with the definition of area law in barren plateau literature. Note that the subvolume law implies weak subvolume law.\\

An information theoretical measure of pure state entanglement is distinguishability of a subsystem from the maximally mixed state
\begin{equation}
\mathcal{I}_{\Lambda}(\rho) := \|\rho_{\Lambda} - \frac{1}{2^{|\Lambda|}}\mathbb{I}\|_1,
\end{equation}
where the norm is taken to be the Schatten 1-norm, the sum of singular values. For any observable $O$ acting non-trivially on $\Lambda$, we have that $|\text{tr}(O\rho) - 2^{-n}\mathrm{Tr}[O]| \leq \mathcal{I}_{\Lambda}(\rho) \|O\|_\infty$.\\

Using the continuity of entropy and the fact that the maximum at maximally mixed state is a critical point, we see that when $|\Lambda| \in O(\log n)$ the state possesses weak subvolume law entanglement within $\Lambda$ if and only if $\mathcal{I}_{\Lambda}(\rho) \in \Omega(1/\mathrm{poly}(n))$. This is done by Taylor expansion around the critical point, where we see that $|\Lambda|-S(\rho_{\Lambda}) \sim \|\rho_{\Lambda} - \frac{1}{2^{|\Lambda|}}\mathbb{I}\|_2^2$ in Hilbert-Schmidt distance. For logarithmic $|\Lambda|$ the Schatten 1-norm is equivalent to the 2-norm up to a polynomial factor.\\

In order for a circuit to be trainable, we want that the gradient of the cost function is large, meaning $|\nabla f| \geq 1/\mathrm{poly}(n)$, for a large portion of the parameter space. To ensure this, one needs a lower bound on partial derivatives.\\

Theorem 2 in \cite{cerezo2021cost} gives a lower bound for the variance of partial derivatives for the cost function $f=\sum_{i}c_i\mathrm{Tr}[ U^{\dagger} O_iU\rho]$ as
\begin{equation}
\mathrm{Var} (\partial_\nu f) \geq c^{D} \sum_{i \in i_{\mathcal{L}}} \sum_{(k,k') \in k_{\mathcal{L}_B}} c_i^2 \epsilon(\rho_{(k,k')}) \epsilon(O_i),
\end{equation}
where $O_i$ are acting on qubits in the forward light cone, the subsystems $(k,k')$ are contained in the backward light cone, and $\epsilon(A) := \|A-\mathrm{tr}(A)\mathbb{I}/\dim (A)\|_2$ is Hilbert-Schmidt distance from the normalized identity.\\

Here we see the role of entanglement: if one of the $\epsilon(\rho_{(k,k')})$ is polynomially large, then the variance of the partial derivative is polynomially large and we have no barren plateau. Using the relation between norms we obtain the bound $\epsilon(\rho_{(k,k')}) \geq 2^{-|(k,k')|} \mathcal{I}_{(k,k')}(\rho)$, and we see that the weak subvolume law on logarithmic subsystems of the initial state implies absence of barren plateaus for HEA - from our assumptions that the circuit is of logarithmic depth and the final observable is local, the prefactor scales inverse polynomially with $n$.\\

\subsection{Preventing Barren Plateaus in our Model}

Our generative circuit has local observables of expected size $\mathrm{poly}(n)^{-1}$, which directly implies a polynomial lower bound on distinguishability from the maximally mixed state on local subsystems, by using the variable in question. The states produced by our generative circuit are therefore suitable initial states for a shallow HEA in order to avoid barren plateaus, since this distinguishability is exactly the required criterion.\\

In general, any circuit that avoids mode collapse for local observables also produces states that are on average locally distinguishable from the maximally mixed state, again implying suitability for the HEA ansatz.

\begin{theorem}
Let $\rho(\gamma)$ be a state with $\mathrm{Var}_{\gamma \sim \Gamma}[\mathrm{Tr}(O\rho(\gamma))] \geq \mathrm{poly}(n)^{-1}$ for $k$-local observables $O$. Then \[\mathbb{E}_{\gamma \sim \Gamma} [\mathcal{I}_{(k_1,k_2)}(\rho(\gamma))^2] \geq \mathrm{poly}(n)^{-1}.\]
\end{theorem}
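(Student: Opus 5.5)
The plan is to bound the distinguishability $\mathcal{I}$ from below by the deviation of a single local observable from its maximally mixed value, and then relate the second moment of that deviation to the variance. Fix a $k$-local observable $O$ supported on the subsystem $\Lambda=(k_1,k_2)$ (the subsystem on which the hypothesis gives variance at least $1/\mathrm{poly}(n)$), normalised so that $\|O\|_\infty\le 1$. Without loss of generality take $O$ traceless; for a Pauli string this is automatic, and in general replacing $O$ by $O-2^{-n}\mathrm{Tr}[O]\mathbb{I}$ shifts every value $\mathrm{Tr}(O\rho(\gamma))$ by the same constant and so leaves the variance unchanged. The inequality already recorded above,
\[
|\mathrm{Tr}(O\rho(\gamma)) - 2^{-n}\mathrm{Tr}[O]| \leq \mathcal{I}_{\Lambda}(\rho(\gamma))\,\|O\|_\infty ,
\]
then gives, pointwise in $\gamma$, $\mathcal{I}_{\Lambda}(\rho(\gamma))^2 \geq \mathrm{Tr}(O\rho(\gamma))^2$.

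Next I take expectations over $\gamma\sim\Gamma$. This yields
\[
\mathbb{E}_{\gamma}[\mathcal{I}_{\Lambda}(\rho(\gamma))^2] \geq \mathbb{E}_\gamma[\mathrm{Tr}(O\rho(\gamma))^2] = \mathrm{Var}_\gamma[\mathrm{Tr}(O\rho(\gamma))] + \big(\mathbb{E}_\gamma[\mathrm{Tr}(O\rho(\gamma))]\big)^2 \geq \mathrm{Var}_\gamma[\mathrm{Tr}(O\rho(\gamma))].
\]
By hypothesis the right-hand side is at least $\mathrm{poly}(n)^{-1}$, which is the claim. For our generative circuit, the explicit bounds come from Theorem \ref{th:sqr_expec} directly, or from Corollary \ref{cor:variance}, applied to a Pauli string supported in $\Lambda$.

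The only real obstacle is bookkeeping around normalisation and support. The hypothesis must supply an observable with $\|O\|_\infty = O(1)$, or at most polynomially bounded, whose support lies inside the subsystem named in the conclusion. If the variance hypothesis is stated for an unnormalised $O$, dividing by $\|O\|_\infty^2$ costs only a polynomial factor, provided $\|O\|_\infty\le\mathrm{poly}(n)$, which holds for Pauli strings and for bounded local Hamiltonian terms. I would make these assumptions explicit in the statement. The variance-to-second-moment step is the key point: subtracting the squared mean can only help. This is also why the bound survives even when the mean is nonzero, as happens for $Z$-strings by Theorem \ref{th:expectation}.
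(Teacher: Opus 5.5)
Your proof is correct and is essentially the paper's argument. The paper lower-bounds $\mathcal{I}_{(k_1,k_2)}(\rho(\gamma))$ using the duality formula for the Schatten 1-norm evaluated at a Pauli $O_1$ supported on $(k_1,k_2)$, which is the same H\"older-type inequality you invoke, and then uses $\mathbb{E}[\mathrm{Tr}(\rho(\gamma)O)^2]\geq \mathrm{Var}[\mathrm{Tr}(\rho(\gamma)O)]$ exactly as you do. Your remarks on support and normalisation spell out what the paper assumes implicitly by restricting to Pauli observables. The traceless reduction is also unnecessary: applying the inequality to $O$ directly bounds $|\mathrm{Tr}(O\rho(\gamma))-c|$ for a fixed constant $c$, and $\mathbb{E}[(X-c)^2]\geq\mathrm{Var}[X]$ for any such $c$.
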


\begin{proof}
Using the duality of Schatten 1-norm, we see that 
\[
\mathcal{I}_{(k_1,k_2)}(\rho(\gamma)) = \sup_{\|A\|_{\infty} = 1} \mathrm{Tr}\left[\left(\rho(\gamma)_{(k_1,k_2)}- \frac{\mathbb{I}_{(k_1,k_2)}}{2^{|k_2-k_1+1|}}\right)A^\dagger\right]
\]
Let $O = O_1 \otimes \mathbb{I}$ be any $k$-local Pauli observable supported on $(k_1,k_2)$. Then $\|O_1\|_\infty = 1$ and 
\[
\sup_{\|A\|_{\infty} = 1} \mathrm{Tr}\left[\left(\rho(\gamma)_{(k_1,k_2)}- \frac{\mathbb{I}_{(k_1,k_2)}}{2^{|k_2-k_1+1|}}\right)A^\dagger\right] \geq \left|\mathrm{Tr}\left[\left(\rho(\gamma)_{(k_1,k_2)}- \frac{1}{2^{|k_2-k_1+1|}}\mathbb{I}_{(k_1,k_2)}\right)O_1^\dagger\right]\right| = |\mathrm{Tr}[\rho(\gamma)O]|.
\]
Therefore
\[
\mathbb{E}_{\gamma\sim\Gamma} \left[\mathcal{I}_{(k_1,k_2)}(\rho(\gamma))^2\right] \geq \mathbb{E}_{\gamma \sim \Gamma} \left[\mathrm{Tr}(\rho(\gamma)O)^2\right] \geq \mathrm{Var}_{\gamma\sim \Gamma}[\mathrm{Tr}(O\rho(\gamma))].
\]
\end{proof}

\section{Truncation Strategies for Pauli Propagation}\label{app:Trunc_Methods}

In this appendix we go through the four truncation methods for Pauli propagation that we tested in Section \ref{subsec:pauli_prop}. Each of these strategies aims to approximate the Pauli sum propagated through the circuit by truncating terms, reducing the computational complexity at the expense of accuracy of the algorithm. All four of these methods are implemented in the Julia library presented in \cite{rudolph2025pauli}.\\

\emph{Small coefficient truncation.} This simple truncation strategy truncates terms in the Pauli sum once the absolute value of their coefficients falls below the specified threshold. The key weakness of this approach is that the number of dropped terms may be very large, causing significant contribution to the final error.\\

\emph{Pauli weight truncation.} Under this strategy we truncate the terms of the Pauli sum where the Pauli string has many non-identity Paulis, the idea being that the overlap is likely to be small for such Pauli strings. This method is effective on random inputs invariant under single qubits on average \cite{angrisani2024classically}. This does not cover the small-angle case, however. In our generative circuit, the structure of the CZ gates in Erdős–Rényi graph ensures that the Pauli sum will have terms with logarithmic number of non-identity Paulis. In particular, gates in the initial layer are likely to create high weight contributions from polynomially significant low weight coefficients, and each subsequent layer can do this also. This means that we would need to increase the number of non-identity Paulis we allow. The number of stored Pauli terms with weight up to $W$ can scale as $O(n^W)$.\\

\emph{Sine truncation.} In this truncation strategy, supposing the circuit is composed of Clifford gates and Pauli rotations, we keep track of the number of sine factors a Pauli path picks up when branching at Pauli rotations, truncating the terms that pick up too many sine factors. This strategy works well when the angles in the circuit are small \cite{Lerch2024}. The number of terms that need to be kept if $S$ sine factors are allowed grows like $O(n^S)$.\\

\emph{Frequency truncation.} As with sine truncation, we again track the branching of a Pauli path, but with this strategy we instead truncate the paths with high frequency: the total number of cosine and sine factors, which is equal to the number of branchings at Pauli rotations. This can efficiently estimate noisy variational circuits \cite{fontana2025classical}. The number of Pauli terms that the algorithm keeps track of when maximum allowed frequency is $F$ is bounded by $O(2^F)$, so this strategy is polynomial for $F = O(\log n)$.\\

These four are not the only possible truncation strategies. For example, path-weight truncation \cite{aharonov2023noisy} is not considered in this paper due to being best known for effectiveness in simulating noisy quantum circuits.



\end{document}